\theoremstyle{plain}
\newtheorem{theorem}{Theorem}
\newtheorem{proposition}[theorem]{Proposition}
\theoremstyle{remark}
\newtheorem{remark}{Remark}
\theoremstyle{definition}
\newtheorem{definition}{Definition}
\theoremstyle{definition}
\newcommand{\Tr}{\operatorname{Tr}}
\newcommand{\rmd}{\mathrm{d}}
\newcommand{\rme}{\mathrm{e}}
\newcommand{\rmi}{\mathrm{i}}
\newcommand{\RE}{\mathrm{\,Re\,}}
\newcommand{\Ebb}{\mathbb{E}}
\newcommand{\Rbb}{\mathbb{R}}
\newcommand{\Cbb}{\mathbb{C}}
\newcommand{\Sbb}{\mathbb{S}}
\newcommand{\T}{\mathtt{T}}
\newcommand{\openone}{\mathds{1}}
\let\idty\openone
\newcommand{\ind}{\mathtt{1}}
\newcommand{\id}{\mathrm{Id}}
\newcommand{\norm}[1]{\left\Vert#1\right\Vert}
\newcommand{\abs}[1]{\left\vert#1\right\vert}
\newcommand{\Bcal}{\mathcal{B}}
\newcommand{\Hcal}{\mathcal{H}}
\newcommand{\Ical}{\mathcal{I}}
\newcommand{\Kcal}{\mathcal{K}}
\newcommand{\Lcal}{\mathcal{L}}
 \newcommand{\Rcal}{\mathcal{R}}
\newcommand{\Tcal}{\mathcal{T}}
 \newcommand{\Wcal}{\mathcal{W}}
\newcommand{\Bscr}{\mathscr{B}}
\newcommand{\Hscr}{\mathscr{H}}
\newcommand{\Kscr}{\mathscr{K}}
 \newcommand{\Nscr}{\mathscr{N}}
\newcommand{\Tscr}{\mathscr{T}}
\def\Order{{\mathbf O}}
\def\LevKhi{L\'evy-Khintchine}
\begin{document}
\title{Hybrid quantum-classical systems: Quasi-free Markovian dynamics}
\author{Alberto Barchielli\footnote{Istituto Nazionale di Fisica Nucleare (INFN), Sezione di Milano,Italy; also
Istituto Nazionale di Alta Matematica (INDAM-GNAMPA)}, \ Reinhard F. Werner\footnote{Institut f\"ur Theoretische Physik, Leibniz Universit\"at, Hannover, Germany}}

\maketitle

\begin{abstract}
In the case of a quantum-classical hybrid system with a finite number of degrees of freedom, the problem of characterizing the most general dynamical semigroup is solved, under the restriction of being \emph{quasi-free}. This is a generalization of a Gaussian dynamics, and it is defined by the property of sending (hybrid) Weyl operators into Weyl operators in the Heisenberg description. The result is a quantum generalization of the \LevKhi\  formula; Gaussian and jump contributions are included. As a byproduct, the most general quasi-free quantum-dynamical semigroup is obtained; on the classical side the Liouville equation and the Kolmogorov-Fokker-Planck equation are included. As a classical subsystem can be observed, in principle, without perturbing it, information can be extracted from the quantum system, even in continuous time; indeed, the whole construction is related to the theory of quantum measurements in continuous time. While the dynamics is formulated to give the hybrid state at a generic time $t$, we show how to extract multi-time probabilities and how to connect them to the quantum notions of \emph{positive operator valued measure} and \emph{instrument}. The structure of the generator of the dynamical semigroup is analyzed, in order to understand how to go on to non quasi-free cases and to understand the possible classical-quantum interactions; in particular, all the interaction terms which allow to extract information from the quantum system necessarily vanish if no dissipation is present in the dynamics of the quantum component. A concrete example is given, showing how a classical component can input noise into a quantum one and how the classical system can extract information on the behaviour of the quantum one.
\end{abstract}

\tableofcontents

\section{Introduction}\label{sec:intro}

Quantum-classical hybrid systems have been studied for various reasons, ranging from computational advantages, description of mesoscopic systems\ldots, to foundational problems, such as the formulation of quantum measurements, see \cite{DamWer22,DGS00,Dio11,Dio23,ManRT23,Sergi+21,Opp+23} and references there in. Quantum measurements can be interpreted as involving hybrid systems: a \emph{positive operator valued measure} is a channel from a quantum system to a classical one, an \emph{instrument} \cite[Sec.\ 4.1.1]{Hol01}  is a channel from a quantum system to a hybrid system \cite{BarL06banach,Dio11,DamWer22,ManRT23}. Also the quantum measurements in continuous time can be interpreted in terms of hybrid systems; here the classical component is the monitored signal extracted from the quantum system \cite{BarL91,BarHL93,BarP96,Hol01,BarPZ98}. A presentation of the theory of measurements in continuous time is developed in Maassen's contribution \cite{Maa23}.

Our aim is to study a class of possible hybrid dynamics in the ``Markov'' case (no memory); so, we need to generalize quantum dynamical semigroups on the quantum side  \cite{Hol01,WisM10}, and semigroups of transition probabilities on the classical side  \cite{Sato99,App09}.  With respect to the theory of quantum measurements in continuous time, we are generalizing the notions of ``convolution semigroups of instruments'' and ``semigroups of probability operators'' \cite{Bar87,BarL89,Bar89,Bar90,BarL90,BarL91,BarHL93,BarP96,BarPZ98,Hol86,Hol89,Bar93}.

As well known, hard mathematical problems arise in the study of dynamical semigroups, when unbounded generators are involved; see for instance \cite{SieHW17,Hol01,ChebF98} for the quantum case. To have a significant class of semigroups, but avoiding these problems, we consider only \emph{quasi-free} transformations, introduced and developed in \cite{DamWer22}. Such transformations are defined by their action on Weyl operators \cite{Van78,DVV79,DamWer22,Hell10}; they generalize the Gaussian case \cite{Hol01}. Quasi-free processes can be characterized by a translation covariance property involving an additional linear map between the phase spaces of input and output systems. The translation covariance would specialize in the classical case to a process whose {\it increments} have a specified distribution. These increments will then be independent by virtue of the Markov property. It is therefore expected that the generators are described by a version of the \LevKhi\  formula. This is indeed a description of our main result. The appearance of quantum versions of the \LevKhi\  formula and of infinitely divisible distributions goes back to the first attempts of constructing a general framework for measurements in continuous time \cite{Hol86,Hol88QPIII,Hol89}.

Our paper is organized as follows.

The hybrid Weyl operators are introduced in Sec.\  \ref{sec:qcW}. The notion of hybrid dynamical semigroup is introduced in Sec.\ \ref{Sec:qfSem}. The main result is the explicit structure of the most general quasi-free hybrid  semigroup. This structure is given in terms of the action on the Weyl operators, and it turns out to to be a quantum generalization of the classical \LevKhi\  formula. The structure of the generator is discussed in Sec.\ \ref{sec:generator}, because it is linked to non quasi-free cases and it helps in understanding the physical interactions.

In Sec.\ \ref{sec:qds} we give the most general quasi-free quantum dynamical semigroup, in the pure quantum case. With respect to the known Gaussian dynamics \cite{HHW10}, also ``jump'' terms are introduced; similar contributions appeared in symmetry based approaches \cite{Hol95,Hol96}. A couple of simple examples are given at the end of the section: a particle in a noisy environment and a quantum harmonic oscillator.

Section \ref{classical} is dedicated to the pure classical case; as a comparison with the quantum case, an example based on the classical harmonic oscillator  is given at the end of the section.
We also show that the deterministic classical Liouville equation is included in the treatment. However, the essential point of this section is to show that not only probability densities at a single time can be constructed, but that the semigroup gives rise also to transition probabilities and multi-time probabilities; then, a whole stochastic process in time is constructed.

The dynamics of a generic hybrid system is studied in Sec.\ \ref{sec:hybrid}.  Now we have an interaction between the quantum and the classical components and a flow of information from the classical sub-system to the quantum one and viceversa. By generalizing the concept of transition probabilities of the classical case, we are able to show that the hybrid dynamics implies the existence of ``transition instruments'' and of multi-time probabilities for the classical component, but which contain information on the quantum system. An example showing the possibility of flow of information in both directions is presented at the end of the section. Conclusions and possible developments are given in Sec.\ \ref{sec:concl}.

\section{Setting and main result}\label{sec:sett+main}

To introduce the notion of quasi-free dynamics we need position and momentum operators for the quantum component and we take the Hilbert space $\Hscr=L^2(\Rbb^n)$; the Lebesgue measure is understood. Then, we take a classical system with $s$ degrees of freedom, living in $\Rbb^s$.

The choice of the spaces of states and of observables can depend on the aims of the construction one wants to realize; a detailed discussion on this point is given in \cite[Secs.\ 2, 3]{DamWer22}. We shall try to define the dynamical semigroup of interest by asking a set of minimal properties and, then, to search for the natural setting. Essentially, there is the possibility of a $W^*$-algebraic approach, which is often well suited to develop a theory of dynamical semigroups for finitely many degrees of freedom, \cite[Secs. 2.8, 4.4, 5.3, 5.7]{DamWer22}, \cite{BarP96}. Alternatively, there are $C^*$-algebraic approaches, needed for instance when classical pure states are important \cite[Sec.\ 2.5]{DamWer22}.

\subsection{Some notations}\label{sec:notations}

Let us start by introducing the various spaces we shall need in the following.

Firstly, we introduce the spaces of operators on $\Hscr=L^2(\Rbb^n)$: 
\begin{itemize}
\item
$\Bscr(\Hscr)$: bounded operators, 
\item $\Tscr(\Hscr)$: trace class, 
\item $\Kscr(\Hscr)$: compact operators. 
\end{itemize}
Then, we introduce the spaces of complex functions on $\Rbb^s$: $L^\infty(\Rbb^s)$, $L^1(\Rbb^s)$, and
\begin{itemize}
\item $C_b(\Rbb^s)$: bounded continuous functions, 
\item $C_0(\Rbb^s)$:  continuous functions vanishing at infinity. 
\end{itemize}

A first choice as observable space is the $W^*$-algebra $\Nscr=\Bscr(\Hscr)\otimes L^\infty(\Rbb^s)$. 
We can identify a generic element $F\in \Nscr$ with a function $F(x)$ from $\Rbb^s$ into $\Bscr(\Hscr)$.
Then, the state space is the predual $\Nscr_*=\Tscr(\Hscr)\otimes L^1(\Rbb^s)$. 
A state $\hat\pi$ is a trace-class valued function $\hat\pi(x)\in \Tscr(\Hscr)$, $x\in \Rbb^s$, such that $\hat\pi(x)\geq 0$ and $\int_{\Rbb^s}\rmd x\, \Tr\{\hat\pi(x)\}=1$. Measurability and integrability properties are always understood.  This choice of state space allows to include generic quantum master equations from one side (Sec.\ \ref{sec:qds}), and, on the classical side, both the Liouville equation and the Kolmogorov-Fokker-Planck equation (Sec.\ \ref{classical}).

Another possible choice, given in \cite[Prop.\ 6]{DamWer22}, is to take $\Kscr(\Hscr)\otimes C_0(\Rbb^s)$ as observable space and its dual as state space.

We denote by $\id$ the identity operator on $\Nscr$ and by $\openone$ the unit element in $\Bscr(\Hscr)$. 
The adjoint of the operator $a\in\Bscr(\Hscr)$ is $a^\dagger$ and $\overline \alpha$ is the complex conjugated of $\alpha\in \Cbb$. For a matrix $M$, not necessarily square, $M^\T$ denote its transpose.

The translation operator on the classical component is denoted by $\Rcal_z$: 
\ $\forall f\in L^\infty(\Rbb^s)$, \ $\Rcal_z[f](x)=f(x+z)$. \  $\Rcal_z$ and $\id\otimes \Rcal_z$ are identified.

We denote by $Q_j$, $P_i$ the position and momentum operators for the quantum component and by  $X_l$ the multiplication operators in the classical component. As in \cite{DamWer22}, we denote by $R$ the vector of the fundamental operators: 
\[
R=\begin{pmatrix}Q\\ P\\ X\end{pmatrix}, \qquad R_j =\begin{cases} Q_j & \text{for } j=1,\ldots, n,
\\ P_{j-n} & j= n+1 ,\ldots, 2n,
\\ X_{j-2n}\quad & j=2n+1,\ldots ,2n+s.
\end{cases}
\]
Then, formally, the commutation rules take the form: 
\begin{equation}\label{sigmaij}
[R_i,R_j]=\rmi\sigma_{ij}\openone, \qquad \sigma_{ij}=\begin{cases}1  & 1\leq i\leq n\quad j=i+n,
\\ -1 \quad & n+1\leq i \leq 2n \quad j=i-n,
\\ 0 &\text{otherwise}. \end{cases}
\end{equation}
\emph{Commutator} and \emph{anti-commutator} are denoted by $[\bullet,\bullet]$ and $\{\bullet,\bullet\}$, respectively.

It is useful to combine the spaces $\Rbb^{2n}$, related to the Hilbert space $\Hscr$, and $\Rbb^s$, appearing in the classical component, in a unique space; so, we define the \emph{phase space} 
\begin{equation*}
\Xi=\Xi_1\oplus \Xi_0, \qquad \Xi_1=\Rbb^{2n}, \qquad \Xi_0=\Rbb^s, \qquad d=2n+s \quad \Rightarrow \quad \Xi=\Rbb^d.
\end{equation*} 
It is also useful to see $\Xi$ and its components as commutative groups under addition (translation groups). Moreover, let $P_{j}$ be the orthogonal projection on $\Xi_j$, $j=1,0$. By using the column notation for vectors, for $\xi, \,\eta\in \Xi$, their scalar product is $\xi^\T \xi=\sum_{i=1}^{d} \xi_i\eta_i$. Finally, let $\sigma $ be the matrix defined by the elements $\sigma_{ij}$ \eqref{sigmaij}.

\subsection{Quantum-classical Weyl operators}\label{sec:qcW}
We introduce now the Weyl operators for the hybrid system \cite[(6)]{DamWer22}.
The parameters of the Weyl operators will live in the space $\Xi$, which is always identified with its dual.
We denote by $\xi$, $\zeta$, $k$, \ldots column vectors and by $\xi^{\T}$, \ldots their transposed versions (row vectors). Often we use
\begin{equation*}
\zeta=\begin{pmatrix} u\\ v\end{pmatrix}\in \Rbb^{2n}\equiv \Xi_1,\qquad k\in \Rbb^s\equiv \Xi_0, \qquad \xi=\begin{pmatrix}\zeta\\ k\end{pmatrix}\in  \Xi.
\end{equation*}

The \emph{Weyl operators} \cite{DamWer22,Hol01} are defined by 
\begin{equation}\label{hybWop}\begin{split}
& W(\xi)=\exp \left\{\rmi R^\T \xi\right\}= W_1(\zeta)W_0(k), \\ &  W_0(k)=\exp \left\{\rmi X^\T k\right\}=\exp \left\{\rmi R^\T P_{0}\xi\right\},
\\
& W_1(u,v)=\exp \left\{\rmi  \left(u^\T Q+v^\T P\right)\right\}=\exp \left\{\rmi R^\T P_1 \xi\right\}.
\end{split}
\end{equation}

\begin{remark}\label{rem:funct}
$W_0(k)$ can be seen as a function from $\Rbb^s$ into $\Cbb$: $W_0(k)(x)=\exp\left\{\rmi x^\T k\right\}$. Moreover, $W_1(\zeta)\in\Bscr(\Hscr)$ and $W_0(k)\in C_b(\Rbb^s)$; therefore, $W(\xi)\in \Nscr$.
\end{remark}

The Weyl operators satisfy the following composition property:
\begin{equation}\label{WxiWxi'}
W(\xi+\eta)
=W(\xi)W(\eta)\exp\left\{\frac\rmi 2\,\xi^\T\sigma\eta\right\}=W(\eta)W(\xi)\exp\left\{-\frac\rmi 2\,\xi^\T\sigma\eta\right\}.
\end{equation}
More rigorously, the Weyl operators $W_1$ are defined as  projective unitary representations of the translation group $\Xi_1$ \cite{Hol01}, or as displacement operators acting on coherent vectors \cite{Parthas92,WisM10}. Then, \eqref{WxiWxi'} represents the rigorous version of the canonical commutation rules \cite{Hol01}.

\subsubsection{Other properties of the Weyl operators} \label{app:Weyl}

By using $W_1$ and the classical translation operator $\Rcal_z$, defined in Sec.\ \ref{sec:notations}, we have the translation properties \cite[(5), (6), (9), Sec.\ 2.6]{DamWer22}:
\begin{subequations}
\begin{equation}\label{classicaltransl}\begin{split}
& \Rcal_z[W_0(k)](x)=W_0(k)(x+z)=\exp\left\{\rmi k^\T(x+z)\right\}=W_0(k)(x)\rme^{\rmi k^\T z}, \\ &\Rcal_z[W(\xi)]=W(\xi)\rme^{\rmi z^\T P_{0}\xi},
\end{split}\end{equation}
\begin{equation}\label{qtranslations}
W_1(u,v)^\dagger Q_iW_1(u,v)=Q_i-v_i,
\qquad W_1(u,v)^\dagger P_iW_1(u,v)=P_i+u_i, \qquad i=1,\ldots,n.
\end{equation}
\end{subequations}
In a shorter form we can write 
\begin{equation}\label{translations}\begin{split}
W(\xi)^\dagger R_iW(\xi)&=R_i+\left(\sigma^\T\xi\right)_i , \qquad i=1,\ldots,2n, \\ &\Rightarrow \qquad \left[\left(\sigma R\right)_i,W(\xi)\right]=\xi_iW(\xi), \qquad i=1,\ldots,2n.
\end{split}\end{equation}

From \eqref{WxiWxi'}, we get the properties:
\begin{subequations}
\begin{equation}\label{W*WW}
W(\eta)^\dagger W(\xi)W(\eta)=W(-\eta)W(\xi+\eta)\rme^{-\frac\rmi 2\,\xi^\T\sigma\eta}=\rme^{\rmi {\eta}^\T\sigma \xi}W(\xi),
\end{equation}
\begin{equation}\label{Weyl:jump}
W(\eta)^\dagger W(\xi)W(\eta)-\frac12\left\{W(\eta)^\dagger W(\eta), W(\xi)\right\}
=W(\eta)^\dagger W(\xi)W(\eta)-W(\xi)=\left(\rme^{\rmi {\eta}^\T\sigma \xi}-1\right)W(\xi).
\end{equation}
\end{subequations}
Also explicit forms of the derivatives of the Weyl operators are very useful in computations:
\begin{equation}\label{WtoR}\begin{split}
\frac{\partial W(\xi)}{\partial \xi_j}&= \frac\rmi 2\left\{ R_j,W(\xi)\right\},\quad j=1,\ldots,d, \\ \rmi k_i W_0(k)(x) &= \frac{\partial \ }{\partial x_i}\, W_0(k)(x), \quad \quad i=1,\ldots,s.
\end{split}\end{equation}

\subsubsection{Characteristic function of a state and Wigner function}\label{sec:Wigf}

As in the pure quantum case, the states $\hat \pi\in\Nscr_*$ are uniquely determined by their characteristic function $\chi_{\hat\pi}(\xi)$ \cite[Sec. 2.4]{DamWer22} or by their Wigner function $\Wcal_{\hat\pi}(z)$ \cite{WisM10}:
\begin{equation}\label{char+Wig}
\chi_{\hat\pi}(\xi)= \int_{\Rbb^s}\rmd x \,\Tr\left\{\hat\pi(x)W(\xi)(x)\right\}, \qquad \Wcal_{\hat \pi}(z)=\frac1{(2\pi)^{d}}\int_{\Xi}\rmd \xi\, \rme^{-\rmi z^\T \xi} \chi_{\hat\pi}(\xi).
\end{equation}
By Bochner's Theorem, the Wigner function is non-negative, $\Wcal_{\hat \pi}(z)\geq 0$, $\forall z\in \Rbb^{d}$, if and only if $\chi_{\hat\pi}(\xi)$ is positive definite: for any choice of the integer $N$, of $\xi_k\in \Xi$, $k=1,\ldots,N$, $c_k\in \Cbb$,
\begin{equation}\label{cposCF}
0\leq \sum_{l,k=1}^N \overline{c_l}\chi_{\hat\pi}(\xi_k-\xi_l)c_k \equiv \sum_{l,k=1}^N\int_{\Rbb^s}\rmd x \, \overline{c_l}\Tr\left\{W(\xi_l)(x)^\dagger\hat\pi(x)W(\xi_k)(x)\right\}c_k\exp\left\{-\frac \rmi 2 \,\xi_k^\T\sigma \xi_l\right\}.
\end{equation}
Due to the last factor, the characteristic function is not positive definite for any state; so, for some states, the Wigner function can become negative for some $z$.

\begin{remark}\label{cf=state} By \cite[Theor.\ 4]{DamWer22}, a function $\chi : \Xi \to \Cbb$ is the characteristic function of a state  if and only if
\\ (1) $\chi$ is continuous, \qquad (2) $\chi(0)=1$,
\qquad
(3) for every integer \ $N$ \ and every choice of \ $\xi_1, \ldots, \xi_N$, \ $\xi_j\in \Xi$, \  the $N\times N$-matrix with elements \  $\chi(\xi_k-\xi_l)\exp\left\{\frac \rmi 2 \,\xi_k^\T\sigma \xi_l\right\}$ \ is positive semi-definite.

To guarantee that the state belongs to $\Nscr_*$ the following property has to be added:
\qquad
(4) $\chi\in L^1(\Xi)$.
\end{remark}

\subsection{A quasi-free dynamical semigroup for a hybrid system}\label{Sec:qfSem}

Here we consider a Markovian dynamics, in the Heisenberg description. With the further condition of translation invariance in the classical component, similar semigroups were introduced inside the theory of quantum measurements in continuous time under the name of \emph{convolution semigroups of instruments}. In that case their generator was characterized under a strong continuity in time, implying that only bounded operators on the quantum component were involved \cite{Hol86,Bar87,Hol89,BarL89,Bar89,BarL90,Bar90,BarL91,BarHL93,BarP96,Hol01,Maa23}. Now we want to treat a generic hybrid system, without the hypothesis of translation invariance and of bounded operators. However, as a first step,  we consider only quasi-free dynamical semigroups,  according to the definition of quasi-free channels given in \cite[Sec.\ 4.1]{DamWer22}.

As motivated in \cite{DamWer22} and recalled at the beginning of Sec.\ \ref{sec:sett+main}, it is better to leave open the choice of the spaces of observables and states. To this end, we give a minimal definition of the semigroup $\{\Tcal_t, \; t\geq 0\}$, involving only its action on the Weyl operators.

\begin{definition}\label{def:new1} A \emph{quasi-free hybrid dynamical semigroup} is a family $\{\Tcal_t, \; t\geq 0\}$ 
of linear operators defined at least on the linear span of the Weyl operators, satisfying the following properties: $\forall t,s\geq 0$,
\begin{itemize}
\item[a.] \label{def:a}(normalization) $\Tcal_t[\openone]=\openone$;
\item[b.] \label{def:b} (initial condition) $\Tcal_0=\id$;
\item[c.] \label{def:c} (semigroup property) $\Tcal_t\circ \Tcal_s=\Tcal_{t+s}$;
\item [d.] \label{def:d}(complete positivity) for any integer $N$ and any choice of the vectors $\phi_k\in \Hscr$ and $\xi^k\in \Xi$, $k=1,\ldots,N$,
\begin{equation}\label{newCPcondition}
\sum_{k,l=1}^N\langle \phi_k|\Tcal_t[W(\xi^k)^\dagger W(\xi^l)]| \phi_l\rangle\geq 0;
\end{equation}
\item[e.] (quasi-free property) \label{def:prop.e} for all $\xi\in \Xi$ 
\begin{equation}\label{Tqf}
\Tcal_t[W(\xi)]=f_t(\xi)W(S_t\xi),
\end{equation}
where $S_t$ 
is a linear operator from $\Xi$ to $\Xi$, and $f_t$ is a continuous function from $ \Xi$ to $\Cbb$;
\item[f.] (continuity in time) the functions $t\to f_t(\xi)$ and  $t\to S_t$ are continuous.
\end{itemize}
\end{definition}

By the property \eqref{WxiWxi'}, the product of Weyl operators is proportional to a Weyl operator, so that the positivity property \eqref{newCPcondition} involves the action of $\Tcal_t$ only on Weyl operators. In \cite{DamWer22} $f_t$ is called \emph{noise function}. By using the composition property \eqref{WxiWxi'} of the Weyl operators and the quasi-free structure \eqref{Tqf}, we get
\begin{equation}\label{TWW}
\Tcal_t[W(\xi^k)^\dagger W(\xi^l)]=f_t(\xi^l-\xi^k)\exp\left\{\frac \rmi 2{\xi^k}^\T \left(\sigma - {S_t}^\T \sigma S_t\right)\xi^l \right\} W(S_t\xi^k)^\dagger W(S_t\xi^l).
\end{equation}

Finding the structure of $f_t(\xi)$ and $S_t$ and constructing explicitly the semigroup requires various steps and involves the theory of classical stochastic  processes, such as processes with independent increments, additive processes, L\'evy processes, and processes of Ornstein-Uhlenbeck type \cite{Sato99,App09,Sko91,SatoY84}. So, here we give only the final results and we leave to Sec.\ \ref{sec:maintheor} the proof of the theorems and the discussion of the construction.

Quasi-free semigroups (on general CCR algebras) were considered also in \cite{Hell10}, and preliminary results on their structure were obtained. Here we arrive to a complete characterization in the case of a finite-dimensional phase space.

The expression of the noise function $f_t(\xi)$ and the proof of Theor.\ \ref{mainTheor} involve infinitely divisible distributions and \LevKhi\  formula. The function $\psi(\xi)$ appearing in the formulation of the theorem is sometimes called \emph{L\'evy symbol} \cite[p.\ 31]{App09}, while $\bigl( A,\, \nu,\, \alpha\bigr)$, appearing in \eqref{formpsi}, is said to be its \emph{generating triplet} \cite[Def.\ 8.2]{Sato99}; $\nu$ is the \emph{L\'evy measure}.
As it is customary for L\'evy measures, the integral \eqref{formpsi} describing the non-Gaussian jump part is separated into small jumps and large jumps using the indicator function $\ind_\Sbb$ of the unit ball $\Sbb=\left\{ \xi\in \Rbb^{d}: \abs\xi<1\right\}$. 
This cutoff function can be changed to any function which is $=1$ for small arguments and vanishes for large ones; this replacement  can be compensated by a change of the vector $\alpha$ (see \cite{Sato99}).

\begin{theorem}\label{mainTheor} The objects $f_t(\xi)$ and $S_t$ describe a  quasi-free hybrid dynamical  semigroup $\{\Tcal_t, \; t\geq 0\}$
in the sense of Definition \ref{def:new1}, if and only if they have the following structure:
\begin{enumerate}
\item \label{prop11} The matrices $S_t$ form a semigroup: there is a real $d\times d$-matrix $Z$ such that 
\begin{equation}\label{Ssemig}
S_t=\rme^{Zt}, \qquad \forall t\geq 0.
\end{equation}

\item \label{mainfprop} The noise function $f_t(\xi)$ is parameterized by a vector $\alpha\in \Xi$, a real symmetric $d\times d$-matrix $A$,
  and a L\'evy measure $\nu$ on $\Xi$, i.e., a $\sigma$-finite measure such that
\begin{equation}\label{propnu}
\nu(\{0\})=0, \qquad \int_{\Sbb} \abs{\eta}^2\nu(\rmd \eta)<+\infty, \qquad \nu(\{\Xi\setminus\Sbb\})<+\infty.
\end{equation}
Then, $f_t$ is determined by
\begin{equation}\label{f+Psi}
f_t(\xi) = \exp\left(\int_0^t\rmd \tau \,\psi(S_\tau \xi)\right),
\end{equation}
where 
\begin{equation}\label{formpsi}
\psi(\xi)=\rmi \alpha^\T \xi -\frac 12 \,\xi^\T A \xi+\int_{\Xi}\nu(\rmd \eta)\left(\rme^{\rmi {\eta}^\T \xi}-1-\rmi \ind_{\Sbb}(\eta) {\eta}^\T \xi\right) , \qquad \forall \xi\in \Xi=\Rbb^{d}.
\end{equation}

\item \label{A+iB} The matrix $A$ satisfies the further positivity condition 
\begin{equation}\label{posmatr}
A \pm \rmi B \geq 0,\qquad B:=\frac 1 2 \left(\sigma Z-Z^\T\sigma^\T\right).
\end{equation}
\end{enumerate}

For every $t\geq 0$, the operator $\Tcal_t$ extends from the linear span of the Weyl operators to $\Nscr$, and this extension is bounded. Moreover, this extension, always denoted by $\Tcal_t$, is the adjoint of a completely positive, normalization preserving map ${\Tcal_t}_\ast$ on the predual $\Nscr_\ast$ such that  $\{{\Tcal_t}_\ast, \ t\geq 0\}$ forms a strongly continuous semigroup.
\end{theorem}

Perhaps the most remarkable feature of this result is that the positivity condition \eqref{posmatr}, which captures all quantum uncertainty constraints of such evolutions,  involves only the Gaussian part, i.e., the matrix $A$, and not the jump part. Since this is also the only place where the symplectic form $\sigma$ enters, the conditions as stated imply those for $\sigma=0$, i.e., for a purely classical system. Understanding this process essentially carries over to the quantum and hybrid cases: whenever the initial state has positive Wigner function, this will be true through the entire evolution, and the Wigner functions follow just the classical process with the same $f_t$ and $S_t$. By linearity this will even be true for non-positive Wigner function, although in this case the probabilistic interpretation is lost. But it is worth noting that the evolution drives towards states with positive Wigner function.

Apart from the above Theorem, the general results given below and in Sec.\ \ref{sec:generator} include the form of the generator of the semigroup, giving the dynamical equations in a way which connects a Lindblad-like form with a Fokker-Planck structure, and a criterion for the existence of an equilibrium state, balancing the effects of a semigroup $S_t$ contracting to the origin and of the noise driving the process away.

\subsection{Proof of the main results}\label{sec:maintheor}
The proof of Theorem~\ref{mainTheor} will be in the following main steps.
\begin{description}
  \item Step 1. \ We evaluate the composition laws for the $f_t$ and $S_t$. If we could assume $f_t(\xi)$ to be differentiable in $t$, this would already imply the form \eqref{f+Psi}, so this is heuristically clear. However, we are {\it not} assuming this; differentiability in time will be a final result.
  \item Step 2. \ We show that the associated classical process ($\sigma=0$) is well-defined. This is in contrast to the positivity conditions for states, where the twisted definiteness does not imply the positivity of a classical object with the same characteristic function (the Wigner function). The difference is made by the divisibility of the process.
  \item Step 3. \ We use the theory of classical additive processes \cite{Sato99} to get the \LevKhi\ structure \eqref{f+Psi}, \eqref{formpsi}. This would be just the standard result for independent increment processes, if we had $S_t=\idty$ for all $t$. The variant we need for general $S_t$ is sometimes called a generalized Ornstein-Uhlenbeck process \cite{SatoY84}.

  \item Step 4. \ We go back to quantum processes to evaluate the complete positivity condition. Heuristically this is the question of how much noise, as described by the Gaussian part $A$ and the jump measure $\nu$, is needed to meet the quantum uncertainty requirement, a conditional complete positivity property for the functional $\psi$. For sufficiency of \eqref{posmatr} we consider the Gaussian case,  and observe that the jumps add a classically positive definite part. This leaves open the possibility of positivity partly ensured by jump noise, with maybe even a vanishing Gaussian part. However, this is ruled out by a scaling argument.
\end{description}

\subsubsection{Step 1: Composition properties}

Let us translate the basic semigroup properties to $f_t$ and $S_t$, using that the map $\Tcal_t$ uniquely determines these objects and is determined by them.

\begin{proposition} \label{lemma:main} Given a semigroup $\Tcal_t$ according to  Definition~\ref{def:new1}, its defining objects $S_t$ and $f_t(\xi)$,  enjoy the following properties.
\begin{enumerate}
\item \label{prop1} The family of matrices $S_t$ forms a semigroup: $S_t=\exp(Zt)$ for some real matrix $Z$.

\item \label{prop02}  The function $f_t(\xi)$ satisfies the normalization condition $f_t(0)=1$ for all $t\geq0$,
and the initial condition $f_0(\xi)=1$ for all $\xi\in\Xi$.
\item \label{prop2} It satisfies the product formula:
\begin{equation}\label{ff}
f_{t+s}(\xi)=f_s(S_t\xi)f_t(\xi), \qquad \forall t,s\geq 0, \quad \forall \xi\in \Xi.
\end{equation}
\end{enumerate}
\end{proposition}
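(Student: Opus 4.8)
The plan is to read off the two objects $f_t(\xi)$ and $S_t\xi$ from the single quantity $\Tcal_t[W(\xi)]$ and then transport the three algebraic properties of Definition~\ref{def:new1} (normalization (a), initial condition (b), semigroup property (c)) through the quasi-free relation (e). The whole argument rests on one identifiability fact, which I would establish first: distinct Weyl operators are linearly independent, so that $c\,W(\eta)=c'\,W(\eta')$ with $c,c'\in\Cbb\setminus\{0\}$ forces $\eta=\eta'$ and $c=c'$. For the quantum factors $W_1(\zeta)$ this is the standard orthogonality of Weyl operators, and for the classical factors $W_0(k)(x)=\rme^{\rmi k^\T x}$ it is the linear independence of Fourier characters; the product structure $W(\xi)=W_1(\zeta)W_0(k)$ then yields independence on all of $\Xi$. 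Consequently the pair $\bigl(f_t(\xi),\,S_t\xi\bigr)$ is uniquely determined by $\Tcal_t[W(\xi)]$, which is exactly what legitimizes the comparisons below.

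For item~\ref{prop02} I would argue directly. Since $W(0)=\exp\{\rmi R^\T 0\}=\openone$, normalization (a) gives $\Tcal_t[W(0)]=\Tcal_t[\openone]=\openone$, while (e) gives $\Tcal_t[W(0)]=f_t(0)\,W(S_t 0)=f_t(0)\,\openone$; hence $f_t(0)=1$ for every $t\ge0$. For the initial condition, $\Tcal_0=\id$ forces $f_0(\xi)\,W(S_0\xi)=W(\xi)$ for all $\xi$, and identifiability yields $S_0=\id$ and $f_0(\xi)=1$.

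For the product formula~\ref{prop2}, and simultaneously the semigroup law for the $S_t$, I would compute $\Tcal_{t+s}[W(\xi)]$ in two ways. On the one hand (e) gives $f_{t+s}(\xi)\,W(S_{t+s}\xi)$. On the other hand, writing (c) as $\Tcal_{t+s}=\Tcal_s\circ\Tcal_t$ and applying (e) twice,
\[
\Tcal_s\bigl[\Tcal_t[W(\xi)]\bigr]=\Tcal_s\bigl[f_t(\xi)W(S_t\xi)\bigr]=f_t(\xi)\,f_s(S_t\xi)\,W(S_sS_t\xi).
\]
Matching the two expressions by identifiability yields $S_{t+s}=S_sS_t$ together with $f_{t+s}(\xi)=f_s(S_t\xi)\,f_t(\xi)$, which is precisely~\ref{prop2}; running the same computation with $\Tcal_{t+s}=\Tcal_t\circ\Tcal_s$ shows in addition that the $S_t$ commute.

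Finally, item~\ref{prop1} follows from $S_{t+s}=S_tS_s$ and $S_0=\id$ together with the continuity of $t\mapsto S_t$ from (f). This is the only place where genuine analytic input is needed, and I expect it to be the main obstacle: mere continuity must be upgraded to the exponential form, since differentiability in $t$ is deliberately not assumed. The cleanest route is the integral trick — set $M_h=\int_0^h S_\tau\,\rmd\tau$, note $M_h/h\to\id$ as $h\to0^+$ so that $M_h$ is invertible for small $h$, and use $S_tM_h=\int_t^{t+h}S_u\,\rmd u$ to write $S_t=\bigl(\int_t^{t+h}S_u\,\rmd u\bigr)M_h^{-1}$. The right-hand side is $C^1$ in $t$, so $S_t$ is differentiable; differentiating the semigroup relation in $s$ at $s=0$ then gives $S_t'=S_tZ$ with $Z:=S_0'$, whence $S_t=\rme^{Zt}$. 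Reality of $Z$ is automatic, as all the $S_t$ are real matrices. Equivalently, one may simply invoke the standard structure theorem for continuous one-parameter matrix semigroups.
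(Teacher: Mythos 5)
Your overall route---linear independence of the Weyl operators to make $\bigl(f_t(\xi),S_t\xi\bigr)$ identifiable from $\Tcal_t[W(\xi)]$, transporting properties (a), (b), (c) through the quasi-free relation (e), then the structure theorem for continuous one-parameter matrix semigroups---is exactly the paper's, and your item~\ref{prop02} and your integral-regularization argument for the exponential form (the $M_h=\int_0^h S_\tau\,\rmd\tau$ trick) are correct; the paper is in fact terser on that last point. However, there is a genuine gap in the step ``matching the two expressions by identifiability yields $S_{t+s}=S_sS_t$ together with the product formula.'' Your identifiability lemma, as you yourself state it, requires the scalar coefficients to be \emph{nonzero}, and at this stage of the argument nothing guarantees $f_{t+s}(\xi)\neq 0$. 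The nonvanishing of $f_t$ is a consequence of infinite divisibility, which the paper only establishes later (Step 3, Proposition \ref{prop:LK}); a merely continuous function with $f_{t+s}(0)=1$, and even a genuine characteristic function, can have zeros. At a point where $f_{t+s}(\xi)=0$, your comparison $f_{t+s}(\xi)W(S_{t+s}\xi)=f_t(\xi)f_s(S_t\xi)W(S_sS_t\xi)$ reads $0=0$ and gives no information about $S_{t+s}\xi$; since you need $S_{t+s}\xi=S_sS_t\xi$ for \emph{every} $\xi$ to get the matrix identity $S_{t+s}=S_sS_t$, on which item~\ref{prop1} rests, the proof as written is incomplete.

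The fix is short and is precisely what the paper does. For the product formula \eqref{ff} the zero set is harmless: Weyl operators are unitary, hence nonzero, so $f_{t+s}(\xi)=0$ forces $f_t(\xi)f_s(S_t\xi)=0$ and \eqref{ff} holds trivially there. For the semigroup law of $S_t$, fix $\xi$ and $t+s$; by continuity of $f_{t+s}$ on $\Xi$ and $f_{t+s}(0)=1$, there exists $\alpha>0$ with $f_{t+s}(\alpha\xi)\neq 0$; applying your identification at $\alpha\xi$ gives $\alpha S_{t+s}\xi=\alpha S_sS_t\xi$, and linearity of the maps then yields $S_{t+s}\xi=S_sS_t\xi$ for the original $\xi$. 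With this patch your argument coincides with the paper's proof; your explicit justification of the identifiability (linear independence of Fourier characters times Weyl--operator independence), which the paper invokes without proof, is a welcome addition.
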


\begin{proof}
By using Eq.\ \eqref{Tqf}, we have that Property a.\ of Def.\ \ref{def:new1} is equivalent to $f_t(0)=1$,  while Property b.\ is equivalent to
$f_0(\xi)=1, \ S_0\xi=\xi$.
By Properties c.\ and e., we get
\[
f_{t+s}(\xi)W(S_{t+s}\xi)=f_t(S_s\xi)f_s(\xi)W(S_tS_s\xi)=f_s(S_t\xi)f_t(\xi)W(S_sS_t\xi) .
\]
If for a certain $\xi$ we have $f_{t+s}(\xi)\neq 0$ we get \eqref{ff}
and $S_{t+s}\xi=S_sS_t\xi$. If  $f_{t+s}(\xi)= 0$ also the product $f_s(S_t\xi)f_t(\xi)$ must vanish and \eqref{ff} holds again. By continuity in $\Xi$ and $f_t(0)=1$, for fixed $\xi$ and $t+s$ it exists $\alpha>0$ such that $f_{t+s}(\alpha\xi)\neq 0$; then, $\alpha S_{t+s}\xi=\alpha S_sS_t\xi$. As a consequence, $S_t$ is a continuous semigroup of real matrices with $S_0=\openone$, so that $S_t=\rme^{Zt}$.
\end{proof}

\begin{remark}\label{rem:iteration} By iteration, the product formula \eqref{ff} gives
\begin{equation}\label{fproduct}
  f_t(\xi)=\prod_{\ell=0}^{m-1} f_{t/m}(S_{\ell t/m}\xi),
\end{equation}
for any integer $m\geq 1$.
Let us also recall that $f_t(\xi)$ is continuous in $t$ and $\xi$.
\end{remark}

\subsubsection{Step 2: Twisted positive definiteness}
The condition on the noise function $f_t$ in \eqref{Tqf}, which ensures the complete positivity \eqref{newCPcondition} of $\Tcal_t$, is a generalization of Bochner's criterion for the positivity of Fourier transforms. Following \cite{DamWer22}, we will say that $f$ is {\it twisted  positive definite with respect to the antisymmetric form $\sigma$}, if the matrix
\begin{equation*}
  M_{jk}=f(\xi_j-\xi_k)\exp\left\{-\frac\rmi2\,\xi_j^\T\sigma\xi_k\right\}
\end{equation*}
is positive semidefinite for any $\xi_1,...,\xi_N\in\Xi$. Since $M\geq0$ implies $M^\dagger=M$, twisted definiteness implies that $f$ is hermitian in the sense that $f(-\xi)=\overline{f(\xi)}$. Moreover, by taking $\xi_k\mapsto-\xi_k$ we  see that twisted definiteness with respect to $\sigma$ implies the same for $-\sigma$. We {\it cannot} conclude that $f$ is also positive definite in the usual sense, i.e., twisted definite with respect to $\sigma=0$. Since for a state the function $f$ is the Fourier transform of the Wigner function, this is the same as saying that there are quantum states with non-positive Wigner function.

For a quasi-free channel $\Tcal(W(\xi))=f(\xi)W(S\xi)$ to be completely positive, it is necessary and sufficient \cite{DamWer22} that $f$ is definite for the difference of input and output symplectic form, i.e., for the antisymmetric matrix
  $\Delta\sigma = \sigma- S^\T\sigma S$.
Again this holds for both signs, and there is no conclusion about zero definiteness. However, we are now looking at the product \eqref{fproduct} and this implies that $f_t$ is positive definite.

We collect these results in the following Proposition (and we give the explicit proofs).

\begin{proposition} \label{+lemma:main} The noise function $f_t$ of Definition \ref{def:new1} enjoys the following properties.
\begin{enumerate}

\item \label{+prop3} The function $f_t$ is \emph{twisted positive definite:} for every integer $N$ and any choice of $\xi^k\in \Xi$ \ and \ $c_k\in \Cbb$, \ $k=1,\ldots, N$, \ we have
\begin{equation}\label{reducedCPcondition}
\sum_{k,l=1}^N \overline{c_k}f_t(\xi^l-\xi^k)\exp\left\{\pm\frac \rmi 2{\xi^k}^\T \left(\sigma - {S_t}^\T \sigma S_t\right)\xi^l \right\} c_l \geq 0.
\end{equation}

\item The function $f_t$ is also positive definite: for every choice of $N$, $\xi$, $c_k$ as above, we have
\begin{equation}\label{fpd}
\sum_{k,l=1}^N \overline{c_k}f_t(\xi^l-\xi^k) c_l \geq 0.
\end{equation}

\item \label{+prop4} For every fixed time $t\geq 0$, the complex function $\xi\to f_t(\xi)$ is the characteristic function of a probability measure on $\Rbb^{d}$.  Also $\xi \to f_s(S_t\xi)$ is the characteristic function of a probability measure on $\Rbb^d$.

\end{enumerate}
\end{proposition}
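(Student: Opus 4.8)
The plan is to handle the three items separately: item \ref{+prop3} by a direct computation from the defining data, item 2 (the genuine difficulty) from the divisibility encoded in the product formula \eqref{fproduct}, and item \ref{+prop4} from Bochner's theorem.

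For item \ref{+prop3} I would insert the quasi-free law \eqref{Tqf} into the complete-positivity inequality \eqref{newCPcondition}. Writing $W(\xi^k)^\dagger=W(-\xi^k)$ and using the Weyl relation \eqref{WxiWxi'} twice --- first to collapse $W(-\xi^k)W(\xi^l)$ into a scalar multiple of $W(\xi^l-\xi^k)$, and then, after applying $\Tcal_t$ and \eqref{Tqf}, to re-expand $W\big(S_t(\xi^l-\xi^k)\big)$ as a multiple of $W(S_t\xi^k)^\dagger W(S_t\xi^l)$ --- one obtains
\[
\Tcal_t[W(\xi^k)^\dagger W(\xi^l)]=f_t(\xi^l-\xi^k)\,\rme^{\frac\rmi2(\xi^k)^\T(\sigma-S_t^\T\sigma S_t)\xi^l}\,W(S_t\xi^k)^\dagger W(S_t\xi^l),
\]
the two symplectic phases combining into the difference form $\Delta\sigma=\sigma-S_t^\T\sigma S_t$ of \eqref{delsig}. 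Since \eqref{newCPcondition} asserts that the left-hand element is a nonnegative function of $x$, I evaluate it at a fixed $x\in\Rbb^s$ and choose the test vectors as $\phi_k=c_k\,W(S_t\xi^k)(x)^{-1}\phi_0$ for a fixed unit vector $\phi_0\in\Hscr$ (legitimate, as each $W(S_t\xi^k)(x)$ is unitary up to a scalar phase). The Gram matrix $\langle W(S_t\xi^k)(x)\phi_k\,|\,W(S_t\xi^l)(x)\phi_l\rangle$ then collapses to $\overline{c_k}c_l$, and \eqref{newCPcondition} reduces exactly to \eqref{reducedCPcondition} with the upper sign. The lower sign follows from the substitution $\xi^k\mapsto-\xi^k$ together with the hermiticity $\overline{f_t(\xi)}=f_t(-\xi)$, the symmetry already noted before \eqref{delsig}.

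The heart of the matter is item 2, the passage from twisted to ordinary positive definiteness. Here I would exploit \eqref{fproduct}: by item \ref{prop1} and the covariance of \eqref{reducedCPcondition} under $\xi^j\mapsto S_{t_\ell}\xi^j$ (with $t_\ell=\ell t/m$), each factor $f_{t/m}(S_{t_\ell}\,\cdot\,)$ is twisted positive definite with respect to $\Theta_\ell=S_{t_\ell}^\T\sigma S_{t_\ell}-S_{t_{\ell+1}}^\T\sigma S_{t_{\ell+1}}$, whose size is $O(1/m)$ and whose sum telescopes to $\Delta\sigma_t$. A pure Schur-product argument over the $m$ factors only reproduces item \ref{+prop3}, because the telescoped twist is the fixed nonzero form $\Delta\sigma_t$; the twist is removed only in the limit $m\to\infty$. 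Since every factor tends to $1$, $f_t$ is the candidate characteristic function of an infinitely divisible law, and I would write $\log f_t(\xi)=\int_0^t\psi(S_\tau\xi)\,\rmd\tau$ with $\psi$ the limit of $m\big(f_{t/m}-1\big)$. The decisive point is that, using \emph{both} signs in \eqref{reducedCPcondition} together with $\tfrac1h\Delta\sigma_h\to-(\sigma Z+Z^\T\sigma)=-2B$, the antisymmetric $B$-correction is dominated from both sides, so the symbol $\psi$ comes out conditionally positive definite; exponentiation (Schoenberg) then delivers \eqref{fpd}. Equivalently one may first establish \eqref{fpd} for arbitrarily small $t$ and propagate it to all $t$ by writing $f_{t+s}$, through \eqref{ff}, as the Hadamard product of the ordinarily positive matrices attached to $f_t$ and to $f_s\circ S_t$, closing the set of admissible times under addition and using continuity in $t$.

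Item \ref{+prop4} is then immediate: $f_t$ is continuous, $f_t(0)=1$ (Proposition \ref{lemma:main}, property \ref{prop02}) and positive definite by item 2, so Bochner's theorem presents it as the characteristic function of a probability measure $\mu_t$ on $\Rbb^d$; moreover $\xi\mapsto f_s(S_t\xi)$ is the characteristic function of the pushforward of $\mu_s$ under $S_t^\T$, again by Bochner, since precomposing a positive-definite function with the linear map $S_t$ preserves positive definiteness. The main obstacle is the limit in item 2: because differentiability in time is \emph{not} assumed (it is to be a consequence, not a hypothesis), the existence of the symbol $\psi$ and the legitimacy of the exponential representation must be secured by the classical theory of infinitely divisible distributions and additive/generalized Ornstein--Uhlenbeck processes, which is precisely the machinery deployed in Step~3.
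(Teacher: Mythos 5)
Your items 1 and 3 are correct and essentially identical to the paper's argument: the same insertion of \eqref{Tqf} into \eqref{newCPcondition} via the Weyl relations to produce \eqref{TWW}, the same choice of test vectors that collapses the Gram matrix to $\overline{c_k}c_l$ (your evaluation at fixed $x$ with unitary-up-to-phase $W(S_t\xi^k)(x)$ is a legitimate, slightly more careful version of the paper's $\phi_l=W(S_t\xi^l)^\dagger c_l\psi$), and Bochner's theorem for item 3. The gap is in item 2, which is the heart of the proposition. Your main route is circular: you define $\psi$ as the limit of $m\bigl(f_{t/m}-1\bigr)$ and write $\log f_t(\xi)=\int_0^t\psi(S_\tau\xi)\,\rmd\tau$, then extract conditional positive definiteness of $\psi$ from both signs of \eqref{reducedCPcondition} and exponentiate via Schoenberg. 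But the existence of that limit is precisely the differentiability in time that the paper refuses to assume (Step 1 states explicitly that differentiability is a \emph{final result}), and your closing sentence defers its justification to ``the machinery deployed in Step 3'' --- yet Step 3 (Sato's theory of additive processes, via \eqref{hatf}, \eqref{muconvol} and Theorem 9.7 of Sato) takes as its \emph{input} that $f_t$ and $f_s(S_t\,\cdot)$ are characteristic functions of probability measures, i.e.\ items 2--3 of this very proposition. So the scaffolding you lean on is erected on the statement being proved. (Your derivative computation does reappear legitimately in the paper, but only in Step 4, \emph{after} \eqref{ftsummary} has been established, where it yields \eqref{posmatr}.) Your fallback --- establish \eqref{fpd} for arbitrarily small $t$ and propagate by Hadamard products through \eqref{ff} --- is a sound propagation step but has no base case: for any fixed $t>0$, however small, the twist $\sigma-S_t^\T\sigma S_t$ is nonzero, and twisted definiteness with a small twist does not imply ordinary definiteness; no mechanism for small times is offered.

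The missing idea is a symmetrization performed \emph{before} the Schur product, not after. Averaging the two $\pm$ matrices in \eqref{reducedCPcondition} yields a positive semidefinite matrix in which the unimodular phase is replaced by the real factor $\cos\bigl(\frac12\,{\xi^k}^\T(\sigma-S_t^\T\sigma S_t)\xi^l\bigr)$. Apply this at time step $t/m$ with the shifted arguments $S_{(l-1)t/m}\xi^j$ (exactly your telescoping observation), take the Schur--Hadamard product of the $m$ resulting positive matrices, and use \eqref{fproduct} to collapse the $f_{t/m}$ factors into $f_t(\xi^j-\xi^k)$: the leftover weight is a product of $m$ cosines, each $1+\Order\bigl((t/m)^2\bigr)$, whose product tends to $1$ as $m\to\infty$, giving \eqref{fpd} directly --- no differentiability, no $\psi$, no Schoenberg. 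You were one step away: you correctly observed that Schur-multiplying the exponential-twisted matrices merely reassembles the full twist $\sigma-S_t^\T\sigma S_t$; the fix is that after symmetrizing each factor, the twist enters through cosines that die in the limit rather than phases that persist.
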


\begin{proof}
By using \eqref{TWW} and the vectors $\phi_l=W(S_t\xi^l)^\dagger  c_l\psi$ in the positivity condition d.\   of Definition \ref{def:new1}, we get \eqref{reducedCPcondition} with the plus sign in the exponent. By using $N=2$, $\xi^1=\xi$, $\xi^2=-\xi$, $c_1=1$, $c_2= 1$ or $=\rmi$, we get $\overline{f_t(\xi)}=f_t(-\xi)$. In \eqref{reducedCPcondition} with the plus sign we make the replacements $c_l\to \overline{c_l}$, $\xi_l\to -\xi_l$ and we take the complex conjugated of the full expression; by using $\overline{f_t(\xi)}=f_t(-\xi)$, we get \eqref{reducedCPcondition} with the minus sign in the exponent. So, property \ref{+prop3} is proved.

By adding together the two positive expressions appearing in \eqref{reducedCPcondition}, we get
\[
\sum_{k,l=1}^N \overline{c_k}f_t(\xi^l-\xi^k) \cos\left(\frac 1 2\,{\xi^k}^\T \left(\sigma - {S_t}^\T \sigma S_t\right)\xi^l \right) c_l \geq 0.
\]
This gives
\[
\sum_{k,j=1}^N \overline{c_k}f_{t/m}\big(S_{(l-1)t/m}(\xi^j-\xi^k)\big) \cos\left(\frac 1 2\big(S_{(l-1)t/m}\xi^k\big)^\T \left(\sigma - {S_{t/m}}^\T \sigma S_{t/m}\right)S_{(l-1)t/m}\xi^j \right) c_j \geq 0.
\]
As the element-wise (Schur-Hadamard) product of non-negative matrices is non-negative, we have
\[
\sum_{k,j=1}^N \overline{c_k} \, c_j \prod_{l=1}^m\biggl[f_{t/m}\big(S_{(l-1)t/m}(\xi^j-\xi^k)\big) \cos\left(\frac 1 2\big(S_{(l-1)t/m}\xi^k\big)^\T \left(\sigma - {S_{t/m}}^\T \sigma S_{t/m}\right)S_{(l-1)t/m}\xi^j \right)\biggr] \geq 0;
\]
by  \eqref{fproduct}, this gives
\[
\sum_{k,j=1}^N \overline{c_k}f_{t}(\xi^j-\xi^k)c_j \prod_{l=1}^m\left[\cos\left(\frac 1 2\big(S_{(l-1)t/m}\xi^k\big)^\T \left(\sigma - {S_{t/m}}^\T \sigma S_{t/m}\right)S_{(l-1)t/m}\xi^j \right)\right]  \geq 0.
\]
Being $S_t$ a matrix semigroup with $S_0=\openone$, clearly, for large $m$, the argument of each cosine is $\Order(t/m)$, so each cosine is $1+\Order\big((t/m)^2\big)$. Hence the whole product goes to $1$ as $m\to\infty$:
\[
\lim_{m\to+\infty}\prod_{l=1}^m\left[\cos\left(\frac 1 2\big(S_{(l-1)t/m}\xi^k\big)^\T \left(\sigma - {S_{t/m}}^\T \sigma S_{t/m}\right)S_{(l-1)t/m}\xi^j \right)\right] =1.
\]
Then, we obtain \eqref{fpd}.

By the normalization $f_t(0)=1$, the positive definite condition \eqref{fpd}, and the continuity of $\xi\to f_t(\xi)$, we have that $f_t(\cdot)$ is the characteristic function of a probability measure on $\Xi$ (Bochner's theorem, see \cite[Prop.\ 2.5, point (i)]{Sato99}). The same arguments give that $f_s(S_t \cdot) $ is the characteristic function of some probability measure. So, also property \ref{+prop4} is proved.
\end{proof}

\begin{remark}\label{rem:chfunct} Being $f_t$ a characteristic function, we have also $\overline{f_t(\xi)}=f_t(-\xi)$; then, equation \eqref{Tqf} implies $\Tcal_t[W(\xi)]^\dagger=\Tcal_t[W(\xi)^\dagger]$.
\end{remark}

\subsubsection{Step 3: \LevKhi\  analysis}\label{sec:step3}

In this section we completely ignore the quantum aspects of the problem; we prove the special case of the theorem for purely classical systems ($\sigma=0$). We take into account the necessary positivity condition \eqref{fpd}, but we postpone to the next step the introduction of the more restrictive condition \eqref{reducedCPcondition}.
Now, $\left\{f_t\right\}_{t\geq 0}$ is a collection of classical characteristic functions satisfying the product formula \eqref{ff},
and the problem of finding its structure fits exactly into the framework of additive processes in the sense of Sato \cite[Sec.\ 9]{Sato99}.

An \emph{additive process} $\tilde X_t$ is defined in \cite[p.3]{Sato99} to be a stochastic process with independent increments, starting from the origin $\tilde X_0=0$, and enjoying stochastic continuity. In \cite[Chapt.\ 3]{Sko91} these processes are simply called  ``stochastically continuous processes with independent increments''. In our context the ``independence of the increments'' essentially follows from the product properties \eqref{ff}, \eqref{fproduct}, as we shall see in the following.

Let us set
\begin{equation}\label{hatf}
\hat f_{s,t}(\xi):= f_{t-s}(S_s \xi), \qquad 0\leq s\leq t.
\end{equation}
By property \ref{+prop4} of Proposition \ref{+lemma:main}, $\hat f_{s,t}(\cdot)$ is the characteristic function of a probability measure on $\Rbb^d$ which we denote by $\mu_{s,t}$. By the product formula \eqref{ff} written with $\xi$ replaced by $S_{t_1}\xi$, $t_1\geq 0$, and by setting $t_2=t_1+t$,  $t_3=t_2+s$, we get
\begin{equation*}
\hat f_{t_1,t_2}(\xi)\hat f_{t_2,t_3}(\xi)=\hat f_{t_1,t_3}(\xi), \qquad 0\leq t_1\leq t_2\leq t_3.
\end{equation*}
This means that for the associated probability measures we have
\begin{equation}\label{muconvol}
\mu_{t_1,t_2}*\mu_{t_2,t_3}=\mu_{t_1,t_3};
\end{equation}
the symbol $*$ denotes the convolution. By applying the results of \cite[Sec.\ 9]{Sato99}, we can find the main properties of the process associated to these probability measures and the explicit structure of our noise function $f_t(\xi)$.

\begin{proposition} \label{prop:LK} The following statements hold:
\begin{enumerate}
\item \label{addproc!} There exists an additive process in law $\tilde X_t$ on $\Rbb^d$ such that $\mu_{s,t}$ is the probability measure of the increment $\tilde X_t-\tilde X_s$, \ $t\geq s\geq 0$.

\item \label{muinfdiv} The probability measure $\mu_{s,t}$ is infinitely divisible.

\item \label{Psiexistence} There is a unique continuous function $\Psi_t(\cdot)$ from $\Rbb^d$ into $\Cbb$, such that
\begin{equation}\label{f=expPsi}
f_t(\xi)=\rme^{\Psi_t(\xi)}, \qquad \forall \xi \in \Rbb^d, \quad \forall t\geq 0;
\end{equation}
$\Psi_t(\xi)$ is also continuous in time and
\begin{equation*}
\Psi_0(\xi)=0, \qquad \Psi_t(0)=0.
\end{equation*}

\item \label{ttriplet} For all $t\geq 0$, the function $\Psi_t(\xi)$ admits the representation: \ $ \forall \xi\in \Xi=\Rbb^{d}$,
\begin{equation}\label{formPsit}
\Psi_t(\xi)=\rmi \alpha(t)^\T \xi -\frac 12 \,\xi^\T A_t \xi+\int_{\Rbb^d}\nu_t(\rmd \eta)\left(\rme^{\rmi {\eta}^\T \xi}-1-\rmi \ind_{\Sbb}(\eta) {\eta}^\T \xi\right) ,
\end{equation}
where $\alpha(t)\in \Rbb^d$, $A_t$ is a symmetric nonnegative definite $d\times d$ matrix, and $\nu_t$ is a measure on $\Rbb^d$ satisfying
\[
\nu_t(\{0\})=0 \qquad \text{and} \qquad \int_{\Rbb^d}\left(\abs \xi^2 \wedge 1\right)\nu_t(\rmd \xi) < +\infty.
\]
The representation of $\Psi_t(\xi)$ by the triplet $\left( A_t,\, \nu_t,\, \alpha(t)\right)$ is unique.

\item \label{triplprop} The triplet $\left( A_t,\, \nu_t,\, \alpha(t)\right)$ enjoys the following properties:
\begin{enumerate}
\item
$
A_0=0, \qquad \nu_0=0,\qquad \alpha(0)=0$;
\item
if $0\leq s\leq t<+\infty$, then $A_s\leq A_t$ and $\nu_s(E)\leq \nu_t(E)$
for every Borel  subset $E$ of $\Rbb^d$;
\item
as $s\to t$ in $(0,+\infty)$, we have $A_s\to A_t$,  \ $\alpha(s)\to \alpha(t)$, and $\nu_s(E)\to \nu_t(E)$ for every Borel  subset $E$ of $\Rbb^d$ such that $E\subset \{x:\abs x >\epsilon\}$, $\epsilon>0$.
\end{enumerate}
\end{enumerate}
\end{proposition}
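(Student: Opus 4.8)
The plan is to recognize the family $\{\mu_{s,t}\colon 0\le s\le t\}$ introduced in \eqref{hatf}--\eqref{muconvol} as the transition system of an additive process and then to read off all five statements from Sato's theory of additive processes \cite[Sec.\ 9]{Sato99}. That theory requires three inputs: the convolution (Chapman-Kolmogorov) identity \eqref{muconvol}, which is already established from \eqref{II}; the normalization $\mu_{s,s}=\delta_0$, which holds because $\hat f_{s,s}(\xi)=f_0(S_s\xi)=1$; and stochastic continuity, $\mu_{s,t}\to\delta_0$ weakly as $t-s\to 0$. Granting these, Sato's construction (Kolmogorov extension together with the independence encoded in \eqref{muconvol}) produces an additive process in law $\tilde X_t$ on $\Rbb^d$ with $\tilde X_t-\tilde X_s$ distributed according to $\mu_{s,t}$, establishing statement \ref{addproc!}. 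Note also that $\hat f_{0,t}(\xi)=f_t(S_0\xi)=f_t(\xi)$ since $S_0=\openone$, so $\mu_{0,t}$ has characteristic function $f_t$.

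\textbf{Stochastic continuity.} I would verify this first, since it is the one genuinely analytic input and the only place where hypothesis f.\ of Definition \ref{def:new1} and the continuity of $t\mapsto S_t$ enter. By hypothesis f., $f_\tau(\eta)\to f_0(\eta)=1$ pointwise as $\tau\to 0$; since each $f_\tau$ is a characteristic function (Proposition \ref{+lemma:main}) and the limit $\equiv 1$ is continuous, L\'evy's continuity theorem upgrades this to uniform convergence on compact sets. Hence, whenever $t-s\to 0$ with $s$ bounded, $\hat f_{s,t}(\xi)=f_{t-s}(S_s\xi)\to 1$, because $S_s\xi$ stays in a compact set; this is exactly $\mu_{s,t}\to\delta_0$. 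The conceptual point worth stressing is that the twist by $S_s$ in \eqref{hatf} is precisely what turns an otherwise time-homogeneous (L\'evy) structure into a time-inhomogeneous additive one, the generalized Ornstein-Uhlenbeck situation, so that the time-inhomogeneous theory of \cite[Sec.\ 9]{Sato99}, rather than L\'evy-process theory, is the correct framework.

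\textbf{Infinite divisibility, logarithm, and L\'evy-Khintchine.} Statement \ref{muinfdiv} then follows from the general fact that increment laws of a stochastically continuous process with independent increments are infinitely divisible: subdividing $[s,t]$ into $m$ equal pieces writes $\mu_{s,t}$ as a convolution of $m$ laws which, by stochastic continuity, form a null (uniformly asymptotically negligible) array, and the limit theorem for such arrays forces infinite divisibility \cite[Sec.\ 9]{Sato99}. This is the decisive step of the outline's Step 2: twisted positive-definiteness at a single time yields only a probability measure via Bochner, whereas it is the divisibility coming from \eqref{ff}, \eqref{fproduct} that delivers the stronger structure. Statement \ref{Psiexistence} is then standard: an infinitely divisible characteristic function has no zeros, so $f_t$ never vanishes and admits a unique continuous logarithm $\Psi_t$ normalized by $\Psi_t(0)=\log f_t(0)=0$; continuity in $\xi$ is inherited from $f_t$, the identity $\Psi_0=0$ from $f_0\equiv 1$, and continuity in $t$ from the time-continuity of $f_t$ together with uniqueness of the distinguished logarithm. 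Statement \ref{ttriplet} is the L\'evy-Khintchine theorem \cite[Thm.\ 8.1]{Sato99} applied to the infinitely divisible law $\mu_{0,t}$: it yields \eqref{formPsit} with a unique generating triplet $(A_t,\nu_t,\alpha(t))$, and this $\Psi_t$ agrees with the logarithm of statement \ref{Psiexistence} by uniqueness of the continuous logarithm vanishing at the origin.

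\textbf{Triplet properties and main obstacle.} For statement \ref{triplprop}, part (a) is immediate from $\mu_{0,0}=\delta_0$, whose triplet is $(0,0,0)$. For part (b), the identity $\mu_{0,t}=\mu_{0,s}*\mu_{s,t}$ gives $\Psi_t=\Psi_s+\log\hat f_{s,t}$; since triplets add under convolution of infinitely divisible laws and the Gaussian part and L\'evy measure of the infinitely divisible factor $\mu_{s,t}$ are nonnegative, one gets $A_t-A_s\ge 0$ and $\nu_t(E)-\nu_s(E)\ge 0$ for every Borel set $E$ (no such monotonicity is claimed, or holds, for the signed vector $\alpha$). Part (c) follows from the pointwise time-continuity of $\Psi_t$ and the continuity theorem relating convergence of infinitely divisible laws to convergence of their triplets \cite[Thm.\ 8.7]{Sato99}; equivalently, properties (a)--(c) are exactly the characterization of the generating-triplet system of an additive process in \cite[Sec.\ 9]{Sato99}. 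I expect the main obstacle not to be any single estimate but the correct identification of our objects with Sato's time-inhomogeneous framework, together with the verification of stochastic continuity described above; once these are in place, statements \ref{muinfdiv}--\ref{triplprop} are faithful transcriptions of \cite[Thm.\ 8.1, Thm.\ 8.7]{Sato99} and the additive-process results of \cite[Sec.\ 9]{Sato99}.
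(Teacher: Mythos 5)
Your proposal is correct and follows essentially the same route as the paper: verifying the convolution identity, normalization $\mu_{s,s}=\delta_0$, and stochastic continuity so as to invoke Sato's additive-process theorem (Theorem 9.7), then Theorem 9.1 for infinite divisibility, the distinguished logarithm (the paper uses \cite[Lemma 7.6]{Sato99} on the jointly continuous family) for statement \ref{Psiexistence}, the \LevKhi\ theorem for \ref{ttriplet}, and \cite[Theor.\ 9.8]{Sato99} for \ref{triplprop}. One small caution: for part (c), convergence of infinitely divisible laws alone (your appeal to \cite[Thm.\ 8.7]{Sato99}) does not in general yield separate convergence of $A_s$ and $\nu_s$, since jump mass can migrate into the Gaussian part; it is the monotonicity from part (b), as used in Sato's Theorem 9.8 (which you also cite, and which the paper invokes directly), that rules this out.
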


\begin{proof}
Let us consider Properties (9.13)-(9.16) in Theorem 9.7 of \cite{Sato99}. Condition (9.13) is exactly \eqref{muconvol}. By the definition \eqref{hatf}, we have $\hat f_{s,s}(\xi)= f_0(\xi)=1$, so that $\mu_{s,s}=\delta_0$, which is condition (9.14).
By the continuity in time of $S_t$ and $f_t(\xi)$ and the definition \eqref{hatf}, we have the continuity in $s$ and $t$ of $\hat f_{s,t}(\xi)$. The weak convergence of probability measures is defined in \cite[Def.\ 2.2]{Sato99} and it is equivalent to the convergence of the characteristic functions, see Properties (vi) and (vii) of Proposition 2.5. These facts easily imply that also conditions (9.15) and (9.16) hold.
Then, by point (ii) of \cite[Theor.\ 9.7]{Sato99}, point \ref{addproc!} is proved.

By Theor.\ 9.1 in Ref.\ \cite{Sato99}, the probability measure $\mu_{s,t}$ is infinitely divisible, which is point \ref{muinfdiv}.

$f_t$ is the characteristic function of the infinitely divisible distribution $\mu_{0,t}$, which gives $f_t(\xi)\neq 0$, $\forall \xi\in \Rbb^d$. Let us define $f_{-t}(\xi)=1$ for $t>0$. By the continuity in time and on the phase space, the complex function $(t,\xi)\to f_t(\xi)$ on $\Rbb^{d+1}$ is continuous; moreover,  $f_t(0)=1$, \ $f_t(\xi)\neq 0$, $\forall \xi\in \Rbb^d$. Then, the statement \ref{Psiexistence} follows from \cite[Lemma 7.6]{Sato99}.

By points \ref{muinfdiv} and \ref{Psiexistence}, $\Psi_t(\xi)$ is the L\'evy symbol associated to the infinitely divisible distribution $\mu_{0,t}$. By  \cite[Theor.\ 8.1 and Def.\ 8.2]{Sato99}, $\Psi_t(\xi)$ has the structure given in point \ref{ttriplet} and this representation is unique.

By the existence of the additive process of point \ref{addproc!}, we have that the hypotheses of point (i) in \cite[Theor.\ 9.8]{Sato99} hold. Then, we get Properties (1)-(3) of that theorem, which correspond to Properties (a)-(c) in point \ref{triplprop}.
\end{proof}

\begin{remark}\label{+intpsi} By Proposition \ref{prop:LK}, $f_t(\xi) = \rme^{\Psi_t(\xi)}$ is the characteristic function of an infinitely divisible distribution. The structure \eqref{formPsit} is due to the \LevKhi\ formula; $\big( A_t\,, \nu_t\,, \alpha(t)\big)$ is the \emph{generating triplet} of $\Psi_{t}(\xi)$.  This representation is unique once the cutoff function $\ind_\Sbb$ has been fixed. Other choices of the cutoff function are possible; a change of cutoff implies a consequent change of the vector $\alpha(t)$ \cite[Remark 8.4]{Sato99}.
\end{remark}

\begin{remark}\label{rem:fff}
By the result \eqref{f=expPsi}, the product formula \eqref{ff} is equivalent to
\begin{equation}\label{Psit+s}
 \Psi_{t+s}(\xi)=\Psi_t(\xi)+\Psi_s(S_t\xi),\qquad  \forall t,s>0, \quad \forall \xi \in \Xi.
\end{equation}
We have also
\begin{equation*}
\hat f_{s,t}(\xi)=\exp\left\{\Psi_{s,t}(\xi)\right\}, \qquad \Psi_{s,t}(\xi)=\Psi_{t-s}(S_s\xi), \qquad 0\leq s \leq t.
\end{equation*}
\end{remark}

\begin{proposition}\label{prop:tripl=int} The triplet $\left( A_t,\, \nu_t,\, \alpha(t)\right)$, introduced in point \ref{ttriplet} of Proposition \ref{prop:LK}, has the structure
\begin{equation}\label{genetrpl}
A_t=\int_0^t\rmd \tau \,S_\tau^\T A S_\tau, \qquad \nu_t(E)= \int_{\Rbb^d} \nu(\rmd \eta)\int_0^t\rmd \tau \,\ind_E\left( S_\tau^\T \eta\right),
\end{equation}
$E$ is any Borel subset of $\Rbb^d$,
\begin{equation}\label{alpha(t)}
\alpha(t)=\int_0^t\rmd \tau \, S_\tau^\T\alpha +\int_{\Rbb^d} \nu(\rmd \eta)\int_0^t\rmd \tau\, S_\tau^\T\eta \left(\ind_\Sbb\big( S_\tau^\T\xi'\big)-\ind_\Sbb(\eta)\right) .
\end{equation}
Here, $\alpha\in \Rbb^d$, \ $A$ is a real symmetric $d\times d$-matrix with $A\geq 0$,
and $\nu$ is a L\'evy measure  on $\Rbb^d$, that is, it satisfies
\eqref{propnu}.
\end{proposition}

\begin{proof}
By iterating the decomposition \eqref{Psit+s} we obtain
\[
\Psi_t(\xi)=\sum_{l=0}^{m-1}\Psi_{t/m}\left( S_{lt/m}\xi\right), \qquad \forall m\geq 1, \quad t\geq 0, \quad \xi\in \Rbb^d.
\]
By \eqref{formPsit} and the uniqueness of the Gaussian contribution in the generating triplet, we have
\begin{equation}\label{Atdecomp}
\xi^\T A_t \xi = \xi^\T\sum_{l=0}^{m-1}
\left(S_{lt/m}\xi\right)^\T A_{t/m} S_{lt/m}\xi,
\end{equation}
\begin{multline}\label{alphanudecomp}
\rmi \alpha(t)^\T \xi+\int_{\Rbb^d}\nu_t(\rmd \eta)\left(\rme^{\rmi {\eta}^\T \xi}-1-\rmi \ind_{\Sbb}(\eta) {\eta}^\T \xi\right) \\ {}= \sum_{l=0}^{m-1}\left\{\rmi \alpha(t/m)^\T S_{lt/m} \xi +\int_{\Rbb^d}\nu_{t/m}(\rmd \eta)\left(\rme^{\rmi {\eta}^\T S_{lt/m}\xi}-1-\rmi \ind_{\Sbb}(\eta) {\eta}^\T S_{lt/m}\xi\right)\right\}.
\end{multline}
Moreover, by point \ref{triplprop}  in Proposition \ref{prop:LK}, we have
\begin{equation}\label{to0}
\lim_{m\to+\infty}A_{t/m}=0, \qquad \lim_{m\to+\infty}\alpha_{t/m}=0, \qquad \lim_{m\to+\infty}\nu_{t/m}(E)=0
\end{equation}
for every Borel  subset $E$ of $\Rbb^d$ such that $E\subset \{x:\abs x >\epsilon\}$, $\epsilon>0$.

From \eqref{Atdecomp} we get
\begin{multline*}
\xi^\T A_t \xi -  \int_0^t\rmd \tau \,\xi^\T S_\tau^\T \,\frac m t A_{t/m} S_\tau\xi
= \xi^\T\sum_{l=0}^{m-1}\left(
S_{lt/m}^\T A_{t/m} S_{lt/m}-\int_{lt/m}^{(l+1)t/m}\rmd \tau S_\tau^\T\frac { m }t\, A_{t/m}S_\tau\right)\xi
\\ {}= \xi^\T\sum_{l=0}^{m-1}S_{lt/m}^\T\left( A_{t/m} -
\int_{0}^{t/m}\rmd \tau S_\tau^\T\frac { m }t\, A_{t/m}S_\tau \right)S_{lt/m}\xi,
\end{multline*}
\begin{multline*}
\lim_{m\to +\infty} \left(\xi^\T A_t \xi -  \int_0^t\rmd \tau \,\xi^\T S_\tau^\T \,\frac m t A_{t/m} S_\tau\xi\right)
=\lim_{m\to +\infty} \xi^\T\sum_{l=0}^{m-1}S_{lt/m}^\T\left( A_{t/m} -
\int_{0}^{t/m}\rmd \tau S_\tau^\T\frac { m }t\, A_{t/m}S_\tau \right)S_{lt/m}\xi
\\ {}= -\lim_{m\to +\infty}\xi^\T\sum_{l=0}^{m-1}\frac t{2m} S_{lt/m}^\T\left(
A_{t/m}Z+Z^\T  A_{t/m} \right)S_{lt/m}\xi
\\ {}= -\frac 12\,\lim_{m\to +\infty}\xi^\T\int_0^t\rmd \tau\left( S_{\tau}^\T
A_{t/m}ZS_\tau+S_{\tau}^\T Z^\T  A_{t/m} S_\tau\right)\xi=0;
\end{multline*}
the last equality is due to \eqref{to0}. The map $K\to  \int_0^t\rmd \tau \, S_\tau^\T K S_\tau$, at least for small times, is invertible, because it is near to the identity map and positivity preserving; then,
we have the existence of the limit
\[
\lim_{m\to+\infty}\frac { m }t\, A_{t/m}= A \geq 0;
\]
then, the first equality in \eqref{genetrpl} holds.

By \eqref{alphanudecomp} we have
\begin{multline*}
\rmi \alpha(t)^\T \xi -\rmi\, \frac { m }t\, \alpha(t/m)^\T\int_0^t\rmd \tau \, S_\tau\xi
+\int_{\Rbb^d}\nu_t(\rmd \eta)\left(\rme^{\rmi {\eta}^\T \xi}-1-\rmi \ind_{\Sbb}(\eta) {\eta}^\T \xi\right)
\\ {} - \frac m t\int_{\Rbb^d}\nu_{t/m}(\rmd \eta)\int_0^t\rmd \tau\left(\rme^{\rmi {\eta}^\T S_{\tau}\xi}-1-\rmi \ind_{\Sbb}(\eta) {\eta}^\T S_{\tau}\xi\right)
\\ {}= \int_{\Rbb^d}\nu_{t/m}(\rmd \eta)\sum_{l=0}^{m -1}\biggl(\rme^{\rmi {\eta}^\T S_{lt/m}\xi}-1-\rmi \ind_{\Sbb}(\eta) {\eta}^\T S_{lt/m}\xi - \frac m t \int_0^{t/m}\rmd \tau\left(\rme^{\rmi {\eta}^\T S_{lt/m}S_{\tau}\xi}-1-\rmi \ind_{\Sbb}(\eta) {\eta}^\T S_{lt/m}S_{\tau}\xi\right)\biggr)\\ {}+\rmi \alpha(t/m)^\T \left(\sum_{l=0}^{m-1}\left(S_{lt/m}-\frac m t\int_{t_l}^{t_l+t/m}\rmd \tau \, S_\tau\right)\right)\xi
\\ {}= \int_{\Rbb^d}\nu_{t/m}(\rmd \eta)\sum_{l=0}^{m -1}\biggl[\rme^{\rmi {\eta}^\T S_{lt/m}\xi}\biggl(1 - \frac m t \int_0^{t/m}\rmd \tau\rme^{\rmi {\eta}^\T S_{lt/m}\left(S_{\tau}-\openone\right)\xi}\biggr) \\ {}-\rmi \ind_{\Sbb}(\eta) {\eta}^\T S_{lt/m}\left(1-\frac m t \int_0^{t/m}\rmd \tau S_{\tau}\right)\xi \biggr]
+ \rmi\alpha(t/m)^\T\left(\sum_{l=0}^{m-1}S_{lt/m}\left(1-\frac m t\int_{0}^{t/m}\rmd \tau \, S_\tau\right)\right)\xi.
\end{multline*}
Then, for $m\to +\infty$, we get
\begin{multline*}
\lim_{m\to + \infty}\biggl\{\rmi \alpha(t)^\T \xi
-\rmi\, \frac { m }t\, \alpha(t/m)^\T\int_0^t\rmd \tau \, S_\tau\xi +\int_{\Rbb^d}\nu_t(\rmd \eta)\left(\rme^{\rmi {\eta}^\T \xi}-1-\rmi \ind_{\Sbb}(\eta) {\eta}^\T \xi\right) \\ {} - \frac m t\int_{\Rbb^d}\nu_{t/m}(\rmd \eta)\int_0^t\rmd \tau\left(\rme^{\rmi {\eta}^\T S_{\tau}\xi}-1-\rmi \ind_{\Sbb}(\eta) {\eta}^\T S_{\tau}\xi\right)\biggr\}
\\{}
=- \lim_{m\to +\infty}\sum_{l=0}^{m -1}\biggl\{\int_{\Rbb^d}\nu_{t/m}(\rmd \eta)\,\frac{\rmi t} {2m}\left(\rme^{\rmi {\eta}^\T S_{lt/m}\xi}-\ind_{\Sbb}(\eta) \right) {\eta}^\T S_{lt/m}Z\xi
+ \frac{\rmi t} {2m}\,\alpha(t/m)^\T S_{lt/m}Z\xi\biggr\}
\\{}
=- \frac \rmi 2\lim_{m\to +\infty}\int_0^t\rmd \tau \biggl\{\int_{\Rbb^d}\nu_{t/m}(\rmd \eta)\left(\rme^{\rmi {\eta}^\T S_{\tau}\xi}-\ind_{\Sbb}(\eta) \right) {\eta}^\T \dot S_{\tau}\xi+ \alpha(t/m)^\T \dot S_{\tau}\xi\biggr\}
\\{}
= \frac 1 2\lim_{m\to +\infty}\biggl\{\int_{\Rbb^d}\nu_{t/m}(\rmd \eta)\left(\rme^{\rmi \eta^\T\xi}- \rme^{\rmi {\eta}^\T S_{t}\xi}+\rmi \ind_{\Sbb}(\eta)  {\eta}^\T \left( S_{t}-\openone\right)\xi\right)
-\rmi \alpha(t/m)^\T \left( S_{t}-\openone\right)\xi\biggr\}=0;
\end{multline*}
the last equality is due to \eqref{to0}. Then, we must have
\[
\nu_t(E) =\lim_{m\to +\infty}\frac m t \int_0^t\rmd \tau \int_{\Rbb^d}\ind_{E}(S_\tau^\T\eta) \nu_{t/m}(\rmd \eta), \qquad  \alpha(t)^\T \xi
=\lim_{m\to +\infty} \frac { m }t\, \alpha(t/m)^\T\int_0^t\rmd \tau \, S_\tau\xi;
\]
these equations give
\[
\lim_{\epsilon\downarrow 0}\frac{\nu_\epsilon (E)}\epsilon=\nu(E),\qquad \lim_{\epsilon\downarrow 0}\frac { \alpha(\epsilon)}\epsilon= \alpha\in \Xi,
\]
the measure $\nu$ satisfies \eqref{propnu}.

Then, one easily checks that 
\eqref{genetrpl} and \eqref{alpha(t)} hold.
\end{proof}

By Eqs.\ \eqref{f=expPsi}, \eqref{formPsit} and Proposition \ref{prop:tripl=int}, we get the structure of Eqs.\ \eqref{propnu}, \eqref{f+Psi}, \eqref{formpsi} in Theor.\ \ref{mainTheor}:
\begin{equation}\label{ftsummary}
f_t(\xi)=\rme^{\Psi_t(\xi)}, \qquad \Psi_t(\xi)=\int_0^t\rmd \tau \,\psi(S_\tau \xi),
\end{equation}
where the function $\psi$ is given in \eqref{formpsi}.

\subsubsection{Step 4: Complete positivity and sufficiency}

Up to now we have proved that conditions \ref{prop11} and \ref{mainfprop} (with $A\geq 0$) in Theorem \ref{mainTheor} are necessary. Now we have to prove condition \ref{A+iB} (due to complete positivity) and the sufficiency of the three conditions. We have also to prove the final statements in Theorem \ref{mainTheor} about the extension of $\Tcal_t$. Here we give this final part of the proof of the main theorem.

\begin{proof} [Proof of Theorem \ref{mainTheor}] Let us consider that $f_t(\xi)$ is twisted positive definite (point \ref{+prop3}  in Proposition \ref{+lemma:main}). Under the condition $\sum_{k=1}^N c_k=0$, the time derivative of \eqref{reducedCPcondition} in $t=0$ is not negative. By using \eqref{Ssemig} and \eqref{ftsummary}, we get
\begin{equation*}
\sum_{k,l=1}^N\overline{c_k}\left(\psi(\xi^k-\xi^l) \pm\frac \rmi 2 \,{\xi^k}^\T\left(\sigma Z+Z^\T\sigma\right)\xi^l \right)c_l\geq 0.
\end{equation*}
By inserting \eqref{formpsi} into this equation, we obtain
\begin{equation*}
\sum_{k,l=1}^N\overline{c_k}\left(\xi^k A\xi^l \pm \frac \rmi 2 \,{\xi^k}^\T\left(\sigma Z+Z^\T\sigma\right)\xi^l+ \int_{\Xi}\nu(\rmd \eta)\rme^{\rmi {\eta}^\T \left(\xi^k-\xi^l\right)}\right)c_l\geq 0, \qquad \sum_{k=1}^N c_k=0.
\end{equation*}
By dilating  $\xi^k$ into $\lambda\xi^k$, $\lambda\in \Rbb$, and dividing by $\lambda^2$, we get
\begin{equation*}
0\leq \sum_{k,l=1}^N\overline{c_k}\left(\xi^k A\xi^l \pm\frac \rmi 2 \,{\xi^k}^\T\left(\sigma Z+Z^\T\sigma\right)\xi^l+ \frac 1{\lambda^2}\int_{\Xi}\nu(\rmd \eta)\rme^{\rmi \lambda{\eta}^\T \left(\xi^k-\xi^l\right)}\right)c_l,
\end{equation*}
which goes to $\sum_{k,l=1}^N\overline{c_k}\left(\xi^k A\xi^l +\frac \rmi 2 \,{\xi^k}^\T\left(\sigma Z+Z^\T\sigma\right)\xi^l\right)c_l$ for $\lambda \to \pm \infty$; then, \eqref{posmatr} is a necessary condition for positivity.
This ends the proof of the necessity of conditions \ref{prop11}--\ref{A+iB}.

By inserting \eqref{f+Psi} and \eqref{formpsi} into \eqref{Tqf} and by using \eqref{TWW}, it is easy to check that the conditions  \ref{prop11}, \ref{mainfprop}, \ref{A+iB} in Theor.\ \ref{mainTheor} are also sufficient to get all the points in Definition \ref{def:new1}.

By linearity, the operator $\Tcal_t$ is extended to the linear span in $\Nscr$ of the Weyl operators \eqref{hybWop}. By \eqref{newCPcondition}, $\Tcal_t$ sends positive elements into positive ones; if $F\in \Nscr$ is in the linear span of the Weyl operators and $F\geq 0$, we have $F\leq \norm{F}\openone$, and, by the normalization of $\Tcal_t$, we get
\begin{equation}\label{Tbound}
\Tcal_t[F]\leq \norm F \Tcal_t[\openone]= \norm F \openone.
\end{equation}
Let $\hat \pi_0$ be a generic state in $\Nscr_*=\Tscr(\Hscr)\otimes L^1(\Rbb^s)$ with characteristic function $\chi_{\hat\pi_0}(\xi)$, defined in \eqref{char+Wig}. Then, we define
\[
\int_{\Rbb^s}\rmd x \,\Tr\left\{{\Tcal_t}_*[\hat\pi_0](x)W(\xi)(x)\right\}:=\int_{\Rbb^s}\rmd x \,\Tr\left\{\hat\pi_0(x)\Tcal_t[W(\xi)](x)\right\}.
\]
By \eqref{Tqf}, we get
\[
\int_{\Rbb^s}\rmd x \,\Tr\left\{{\Tcal_t}_*[\hat\pi_0](x)W(\xi)(x)\right\}=f_t(\xi)\int_{\Rbb^s}\rmd x \, \rme^{\rmi x^\T P_0S_t\xi} \Tr\left\{\hat\pi_0(x)W_1(P_1S_t\xi)\right\}=: \chi_{\hat \pi_t}(\xi).
\]
By the properties of $f_t(\xi)$ and $S_t$, one can check that all the four properties in Remark \ref{cf=state} hold; then, $\chi_{\hat \pi_t}(\xi)$ is the characteristic function of a state $\hat \pi_t$ in $\Nscr_*$.
Then, we can define ${\Tcal_t}_*$ on $\Nscr_*$ by  ${\Tcal_t}_*[\hat\pi_0]=\hat \pi_t$ and linear extension. One can check that $\{{\Tcal_t}_*, \; t\geq 0\}$ is a strongly continuous semigroup of completely positive, normalization preserving maps.

By defining $\Tcal_t={{\Tcal_t}_*}^*$ on the whole $\Nscr$, by \eqref{Tbound}, we get that $\Tcal_t$ is bounded with norm 1. This is the unique linear extension to $\Nscr$  of the operator of Definition \ref{def:new1}.

This ends the proof of the theorem.
\end{proof}

\begin{remark}\label{rem:SZ}
Being $Z$ the generator of $S_t$, from \eqref{posmatr} we get the following positivity condition on $A_t$, given by the first equality in \eqref{genetrpl}:
\begin{equation}\label{Atpos}
0\leq \int_0^t\rmd \tau \, S_\tau^\T \left[A\pm \frac \rmi 2\left(\sigma Z+Z^\T\sigma\right)\right]S_\tau =A_t\pm \frac\rmi 2 \left(S_t^\T \sigma S_t-\sigma\right).
\end{equation}
\end{remark}

\subsection{Wigner function and equilibrium}\label{sec:Wig}

We can introduce the characteristic function of the hybrid state $\hat\pi_t$ (see Sec.\ \ref{sec:Wigf}) \cite[Theor.\ 4]{DamWer22}; by using the explicit form \eqref{Tqf}, \eqref{f+Psi} of the action of $\Tcal_t$ on the Weyl operators, we get
\begin{equation}\label{Tchi}
\chi_{\hat\pi_t}(\xi)= \int_{\Rbb^s} \rmd y\,\Tr\left\{\hat\pi_t(y)W(\xi)(y)\right\}= \rme^{\Psi_t(\xi)}\chi_{\hat \pi_0}(S_t\xi).
\end{equation}
As one can check, condition \eqref{Atpos} guarantees the positivity condition \eqref{cposCF} for $\chi_{\hat\pi_t}(\xi)$.

The quantity $f_t(\xi)=\rme^{\Psi_t(\xi)}$ is the characteristic function of an infinitely divisible distribution. Its Fourier transform is non-negative because it is a probability density; possibly, the probability has discrete components and the ``density'' has singular components. Formally, we can write
\[
\Wcal_{\Psi_t}(z):=\frac1{(2\pi)^{2n+s}}\int_{\Xi}\rmd \xi\, \exp\left\{-\rmi z^\T \xi+\Psi_t(\xi)\right\}\geq 0.
\]
Then, by \eqref{char+Wig}, we get the Wigner function of the hybrid state $\hat\pi_t$:
\begin{equation}\label{genWig}
\Wcal_{\hat\pi_t}(z)=\frac1{(2\pi)^{2n+s}}\int_{\Xi}\rmd \xi\, \rme^{-\rmi z^\T\tilde \sigma \xi} \rme^{\Psi_t(\xi)} \chi_{\hat \pi_0}(S_t\xi) =\frac1{(2\pi)^{2n+s}}\int_{\Xi}\rmd x\int_{\Xi}\rmd \xi\, \Wcal_{\Psi_t}(x)\chi_{\hat \pi_0}(S_t\xi)\rme^{-\rmi\left( z-x\right)^\T\tilde \sigma \xi} .
\end{equation}
This Wigner function can be also expressed as the convolution
\[
\Wcal_{\hat\pi_t}(z) =\frac1{\abs{\det S_t}}\int_{\Xi}\rmd x\, \Wcal_{\Psi_t}(x)\Wcal_{\hat \pi_0}\big( S_t^\T\left(z-x\right)\big).
\]
The presence of possible negative zones in $\Wcal_{\hat \pi_0}$ depends on the choice of the state $\hat \pi_0$; the convolution with the positive function $\Wcal_{\Psi_t}(x)$ tends to diminish the possible negative contributions. We can say that the quasi-free evolution tends to diminish the quantum signature of the Wigner function.

\subsubsection{Approach to equilibrium}\label{sec:equil}

By using the Wigner function \eqref{genWig} of the hybrid state, it is easy to give a formula for the final equilibrium state of a quasi-free semigroup.
As $f_t(\xi)$ is the characteristic function of a probability distribution, the study of its behaviour for large times is a purely classical problem; results on this problem are given in \cite{SatoY84}.

Let us assume that $S_t$ is a contracting semigroup, that is  $S_t\xi\to0$ as $t\to\infty$. Since it is a matrix semigroup, this condition implies also that $\abs{S_t\xi}\leq e^{-c t}\abs\xi$ for some constant $c>0$. Then, in \eqref{Tchi} the second factor converges to $\chi_{\hat \pi_0}(0)=1$,
i.e. the initial state becomes irrelevant for the limit. For what concerns the first factor,
we assume that the following integral converges:
\begin{equation*}
\int_{\abs \xi>1}\ln\abs \xi \nu(\rmd \xi)<+\infty.
\end{equation*}
By \cite[Theor.\ 4.1]{SatoY84}, under these hypotheses we have the existence of the limit $\Psi_\infty(\xi)=\int_0^{+\infty}\rmd \tau \,\psi(S_\tau \xi)$; moreover, $\rme^{\Psi_\infty(\xi)}$ is the characteristic function of an infinitely divisible distribution.

Under the same hypotheses, from \eqref{genWig}, we have also $\Wcal_{\hat\pi_t}(z) \to \Wcal_{\Psi_\infty}(z)$ which is non-negative, but not necessarily a $L^1$-function: one can have a limit state, but, perhaps,  not  in $\Nscr_*$. A $C^*$- algebraic approach is needed to include the limit state; see the discussion in Sec.\ \ref{sec:notations} and in \cite{DamWer22}.

\section{The generator}\label{sec:generator}

In Theorem \ref{mainTheor} we constructed  the explicit form of the quantities involved in the action of $\Tcal_t$ on the Weyl operators; then, by linearity and weak$^*$-continuity, we obtained the action on the whole $W^*$-algebra $\Nscr$. So, the generator of the semigroup was not needed to determine the semigroup, but it is useful to better understand the physical interactions and to connect the quasi-free case to other situations.

By differentiating \eqref{Tqf} and using \eqref{WtoR}, we get the action of the generator $\Kcal$ of $\Tcal_t$ on the Weyl operators:
\begin{equation*}
\Kcal[W(\xi)]
=\psi(\xi)W(\xi)+\frac \rmi 2 \left\{ R^\T   Z\xi, W(\xi)\right\}.
\end{equation*}
Then, by dividing a ``Gaussian'' part from a ``compensated jump'' part, we can write this generator as
\begin{subequations}\label{K1+K_2xi}
\begin{equation}\label{K1W}
\Kcal= \Kcal_1 +\Kcal_2, \qquad \Kcal_1[W(\xi)]=\left(\rmi \alpha^\T \xi -\frac 12 \,\xi^\T A \xi\right)W(\xi)+\frac \rmi 2 \left\{ R^\T  Z\xi, W(\xi)\right\},
\end{equation}
\begin{equation}\label{K2W}
\Kcal_2[W(\xi)]=\int_{\Xi}\nu(\rmd \eta)\left(\rme^{\rmi {\eta}^\T \xi}-1-\rmi \ind_{\Sbb}(\eta) {\eta}^\T \xi\right)W(\xi).
\end{equation}
\end{subequations}
We recall that $(A,\, \nu,\,\alpha) $ is the generating triplet of $\psi$ given in Theor.\ \ref{mainTheor}, and
$ Z$ is the real $(2n+s)\times (2 n+s)$-matrix generating the semigroup $S_t$ \eqref{Ssemig}.
Let us stress that the above decomposition is not unique, as there are equivalent ways of writing the compensator in the \LevKhi\  formula \cite[Remark 8.4]{Sato99};  a change in the compensator amounts to a change in $\alpha$.
This generator can be extended at least to the linear span of the Weyl operators and expressed by introducing suitable operators on the Hilbert space $\Hscr$.

\begin{remark}\label{rem:block}
In the following we use the block notation with respect to the decomposition $\Xi=\Xi_1\oplus \Xi_0$:
\begin{equation}\label{blocks1}
\alpha=\begin{pmatrix}\beta \\ \alpha^0\end{pmatrix}, \qquad A=\begin{pmatrix}A^{11} & A^{10} \\ A^{01}& A^{00}\end{pmatrix}, \qquad Z=\begin{pmatrix}Z^{11} & Z^{10} \\ Z^{01}& Z^{00}\end{pmatrix};
\end{equation}
recall that $A$ is a real symmetric matrix. With this notation, the matrix $B$ defined in \eqref{posmatr} becomes
\begin{equation}\label{B,Z}
B=\frac 1 2 \,\left(\sigma Z-Z^\T\sigma^\T \right) =\frac 12 \begin{pmatrix}\sigma Z^{11} -{Z^{11}}^\T \sigma^\T & \sigma Z^{10} \\ -{Z^{10}}^\T\sigma^\T& 0\end{pmatrix}= \begin{pmatrix}B^{11}  & B^{10} \\ B^{01}& 0\end{pmatrix}.
\end{equation}
We define also
\begin{equation}\label{D<-Z}
D^{11}=\frac 1 2 \,P_1\left( Z \sigma+\sigma^\T Z^\T \right)P_1 =\frac 12 \left( Z^{11}\sigma +\sigma^\T{Z^{11}}^\T  \right).
\end{equation}
By inverting the previous equations with respect to $Z$, we get
\begin{equation}\label{Z:D+B}
Z^{11}= D^{11}\sigma^\T-\sigma B^{11}, \qquad Z^{10}=2\sigma^\T B^{10}.
\end{equation}
\end{remark}

\begin{proposition}\label{prop:diffgen} When $a$ is in the linear span of the Weyl operators and $f$ is bounded and twice differentiable, the generator \eqref{K1+K_2xi} can be written in the following form: $\Kcal= \Kcal_1 +\Kcal_2$,
\begin{subequations}\label{K1+K_2diff}
\begin{multline}\label{K1(x)}
\Kcal_1[a\otimes f](x)= \frac 12\, a\sum_{i,j=1}^{s} A^{00}_{ij}\,\frac{\partial^2 f(x)}{\partial x_i \partial x_j}+\rmi\sum_{i=1}^{2n} \sum_{j=1}^{s}R_i a\left[\sigma^\T\left( A^{10}- \rmi B^{10}\right)\right]_{ij}  \frac{\partial f(x)}{\partial x_j}
\\ {}  -\rmi \sum_{i=1}^{s} \sum_{j=1}^{2n} \frac{\partial f(x)}{\partial x_i} \left[\left( A^{01} - \rmi B^{01}\right)\sigma\right]_{ij} aR_j
+  \sum_{i,j=1}^{2n}\left[\sigma^\T\left( A^{11} -\rmi B^{11}\right)\sigma\right]_{ij}\left(R_iaR_j -\frac 12 \left\{R_iR_j,a\right\}\right)f(x)
\\ {} +a\sum_{j=1}^s \alpha^0_j\, \frac{\partial f(x)} {\partial x_j}+ a \sum_{i,j=1}^s x_i Z^{00}_{ij} \frac{\partial f(x)}{\partial x_j}+\rmi\left[ H_{\rm q}+H_x,\, a\right]f(x) ,
\end{multline}
\begin{equation}\label{Hamilq}
H_{\rm q}=\beta^\T \sigma P_1R+\frac 12 \,R^\T P_1  D^{11}  P_1 R,
\end{equation}
\begin{equation}\label{Hamilx}
H_x= x^\T P_0 ZP_1\sigma R=\sum_{i=1}^s \sum_{j=1}^{2n} x_i Z^{01}_{ij}\left(\sigma R\right)_j,
\end{equation}
\begin{multline}\label{K2(x)}
\Kcal_2[a\otimes f](x)=\int_{\Xi}\nu(\rmd \eta)\biggl\{f(x+y)W_1(\sigma\zeta')^\dagger a W_1(\sigma\zeta') - f(x)a
\\ {}- \ind_{\Sbb}(\eta) \biggl( \rmi f(x) \left[{\zeta'}^\T \sigma R, \,a\right]+ a\sum_{j=1}^s y_j\frac{\partial f(x) }{\partial x_j} \biggr)\biggr\}, \qquad \eta=\begin{pmatrix}\zeta' \\ y\end{pmatrix}.
\end{multline}
\end{subequations}
\end{proposition}
Recall that $B$, $D$ and $Z$ are connected by \eqref{D<-Z} and \eqref{Z:D+B}.
\begin{proof}
By Eqs.\ \eqref{translations} and \eqref{WtoR}, we get the correspondence rules: $(P_1\xi)_i=\zeta_i \to [(\sigma R)_i, \bullet ]$, \ $(P_0\xi)_i=k_i \to -\rmi \,\frac{\partial \ }{\partial x_i}$. By applying these rules to $\Kcal_1$ \eqref{K1W}, we obtain \eqref{K1(x)}-\eqref{Hamilx} by direct computations.

By using \eqref{classicaltransl}, \eqref{W*WW}, \eqref{WtoR}, and ${\eta}^\T= ( {\zeta'}^\T, y^\T)$ we get
\[
\rme^{\rmi {\eta}^\T \xi}W(\xi)=\rme^{\rmi y^\T k}W_0(k) \rme^{\rmi {\zeta'}^\T  \zeta}W_1(\zeta) = W_1(\sigma\zeta')^\dagger W_1(\zeta)W_1(\sigma\zeta')\otimes \Rcal_y[W_0(k)],
\]
\[
\left[\rmi {\eta}^\T\xi W(\xi)\right](x)=\rmi \left[{\zeta'}^\T \sigma R, W_1(\zeta)\right]W_0(k)(x)+ W_1(\zeta)\sum_{j=1}^s y_j\frac{\partial \ }{\partial x_j}\, W_0(k)(x);
\]
the classical translation operator $\Rcal_y$ is defined in Sec.\ \ref{sec:notations}. Then, we have \eqref{K2(x)}.
\end{proof}

Eventually, the domain of the generator can be extended by weak$^*$-closure. It is possible to check that the structure of this generator is analogous to the structure of  \cite[(4.39)]{BarH95}, while in that reference only bounded operators on the Hilbert space were allowed.

The first four addends in \eqref{K1(x)} form a hybrid dissipative contribution and they are connected  by the positivity condition \eqref{posmatr}; indeed we have
\begin{multline}\label{ABZglobal}
0\leq \left(\sigma^\T+P_0\right)\left(A-\rmi B\right)\left(\sigma+P_0\right)=\begin{pmatrix}\sigma^\T\left( A^{11} -\rmi B^{11}\right)\sigma&
\sigma^\T\left( A^{10}- \rmi B^{10}\right)\\ \left( A^{01} - \rmi B^{01}\right)\sigma & A^{00}\end{pmatrix}
\\
{}=\begin{pmatrix}\sigma^\T A^{11}\sigma  +\frac \rmi2\left(\sigma^\T {Z^{11}}^\T- Z^{11}\sigma\right)&
\sigma^\T A^{10}- \frac\rmi 2 \, Z^{10}\\ A^{01}\sigma +\frac \rmi2\, {Z^{10}}^\T & A^{00}\end{pmatrix}.
\end{multline}

\subsection{The reduced dynamics}\label{sec:reddyn}
Let us start from the reduced dynamics of the quantum component. By setting $f(x)=1$ in the expression \eqref{K1+K_2diff} of the generator we get
\begin{equation}\label{Ka1}
\Kcal[a\otimes 1](x)= \Lcal_{\rm q}[a]+\rmi \left[H_x,\, a\right],
\end{equation}
\begin{multline}\label{L_q}
\Lcal_{\rm q}[a]= \rmi\left[ H_{\rm q}, a\right]+ \sum_{i,j=1}^{2n}\left[\sigma^\T\left( A^{11} -\rmi B^{11}\right)\sigma\right]_{ij}\left(R_iaR_j -\frac 12 \left\{R_iR_j,a\right\}\right)
\\ {}+
\int_{\Xi}\nu(\rmd \eta)\left\{W_1(\sigma\zeta')^\dagger a W_1(\sigma\zeta') - a
- \ind_{\Sbb}(\eta)  \rmi  \left[{\zeta'}^\T \sigma R, \,a\right] \right\}, \qquad \zeta'=P_1\eta.
\end{multline}
The last term in \eqref{Ka1} can be seen as a random Hamiltonian evolution, because the classical variables $x_j$, $j=1,\ldots,s$, appear in $H_x$, which is defined in \eqref{Hamilx}. The dynamics of the quantum component alone is not autonomous, unless $Z^{01}$ vanishes; we can say that $Z^{01}$ controls  the information flow from the classical system to the quantum one. As we shall see in Sec.\ \ref{sec:qds}, $\Lcal_{\rm q}$ can be written in the usual Lindblad form; the jump term and the term involving $A^{11}$ are of dissipative type.

We consider now the reduced dynamics of the classical component. By setting $a=\openone$ in the expression \eqref{K1+K_2diff} of the generator and by taking into account \eqref{ABZglobal} and that $A$ is a symmetric matrix and $B$ an antisymmetric one, we get
\begin{equation}\label{K1f}
\Kcal[\openone\otimes f](x)= \Kcal_{\rm cl}[ f](x)+ \sum_{i=1}^s \sum_{j=1}^{2n} R_jZ^{10}_{ji} \,\frac{\partial f(x)}{\partial x_i},
\end{equation}
\begin{multline}\label{K_cl}
\Kcal_{\rm cl}[f](x)=  \sum_{j=1}^s \alpha^0_j\, \frac{\partial f(x)} {\partial x_j} + \sum_{i,j=1}^s x_i Z^{00}_{ij} \frac{\partial f(x)}{\partial x_j}+\frac 12\sum_{i,j=1}^{s}A^{00}_{ij}\,\frac{\partial^2 f(x)}{\partial x_i \partial x_j}
\\ {}+
\int_{\Xi}\nu(\rmd \eta)\biggl\{f(x+y) - f(x)- \ind_{\Sbb}(\eta) \sum_{j=1}^s y_j\frac{\partial f(x) }{\partial x_j} \biggr\}, \qquad y=P_0\eta.
\end{multline}
The last term in \eqref{K1f} contains the quantum operators $R_j$; so, also the dynamics of the classical component alone is not autonomous, unless $Z^{10}$ vanishes. We can say that $Z^{10}$ controls  the information flow from the quantum system to the classical one. Again the jump term and the term with the second derivatives are of dissipative type; we shall discuss their role in Sec.\ \ref{classical}. The classical component $\Kcal_{\rm cl}$ of the generator is exactly the generator given in \cite[(1.1)]{SatoY84}, for the so called \emph{processes of Ornstein-Uhlenbeck type}.

\subsection{The classical-quantum interaction }\label{sec:gen+int}
To clarify the meaning of the various terms in the generator \eqref{K1+K_2diff}, it is useful to rewrite it   in a way which puts in evidence the quantum-classical interaction terms:
\begin{subequations}\label{interactions}
\begin{gather}\label{red+int}
\Kcal[a\otimes f](x)=\Lcal_{\rm q}[a]f(x)+ a\Kcal_{\rm cl}[ f](x) +\sum_{l=1}^4\Kcal_{\rm int}^l[a\otimes f](x),
\\ \label{K12int1}
\Kcal_{\rm int}^1[a\otimes f](x)=\rmi \left[H_x,\, a\right]f(x), \qquad
\\ \label{K12int2}
\Kcal_{\rm int}^2[a\otimes f](x)= \frac 12 \sum_{i=1}^s \sum_{j=1}^{2n} \{a,R_j\}Z^{10}_{ji} \,\frac{\partial f(x)}{\partial x_i},
\\  \label{K3int}
\Kcal_{\rm int}^3[a\otimes f](x)=\sum_{i=1}^s \sum_{j=1}^{2n} \rmi[\left(\sigma R\right)_j,a ] A^{10}_{ji}\,\frac{\partial f(x)}{\partial x_i},
\\ \label{K4int}
\Kcal_{\rm int}^4[a\otimes f](x)=
\int_{\Xi}\nu(\rmd \eta)\left(f(x+P_0\eta)-f(x)\right)\left(W_1(\sigma P_1\eta)^\dagger a W_1(\sigma P_1\eta)-a\right).
\end{gather}
\end{subequations}

The interaction $\Kcal_{\rm int}^1$ \eqref{K12int1} involves the random Hamiltonian $H_x$, defined in \eqref{Hamilx}; its role has been discussed in Sec.\ \ref{sec:reddyn}: it represents a kind of force exerted on the quantum system by the classical one. On the other side, the interaction $\Kcal_{\rm int}^2$ \eqref{K12int2} represents some action of the quantum system on the classical one.
Note  that $Z^{10}\equiv 2\sigma^\T B^{10}$ is involved in the positivity condition \eqref{posmatr}, \eqref{ABZglobal}; in some sense this interaction term injects some quantum uncertainty into the classical output.

The interaction term $\Kcal_{\rm int}^3$ \eqref{K3int} has a peculiar structure, as it vanishes either when the reduced classical dynamics is considered ($a=\openone$), either when the reduced quantum dynamics is considered ($f(x)=1$). Also this term is involved in the positivity condition \eqref{posmatr}, \eqref{ABZglobal}.

Finally, $\Kcal_{\rm int}^4$ \eqref{K4int} represents the interaction contained in the jump part;
also this term vanishes either from the reduced classical dynamics, either from the reduced quantum dynamics. This is a property of the quasi-free character of the dynamics, because the Lindblad operators involved in the jump part are proportional to unitary operators; it could be seen that this fact makes the probability law of the inter-jump time independent from the quantum state \cite{WisM10,BarB91}.

\subsubsection{The role of the dissipative terms}
It is interesting to see what happens in the case of no-dissipation in the purely quantum component of the dynamics, i.e.\ in $\Lcal_{\rm q}$ given by \eqref{L_q}. Firstly, we have to take $A^{11}=0$;  by the positivity condition \eqref{posmatr} and the structure \eqref{blocks1}, \eqref{B,Z}, \eqref{Z:D+B}, this gives also the vanishing of other terms:
\begin{equation*}
A^{11}=0 \qquad \Rightarrow \qquad \begin{matrix} A^{01}={A^{10}}^\T=0, \quad B^{01}=-{B^{10}}^\T=0, \\   B^{11}=0, \quad Z^{10}=0, \quad  Z^{11}=D^{11}\sigma^\T .\end{matrix}
\end{equation*}
To have no dissipation in $\Lcal_{\rm q}$, also the compensated jump term in \eqref{L_q} has to vanish. These conditions imply the vanishing of the interaction terms $\Kcal_{\rm int}^2$,  $\Kcal_{\rm int}^3$, $\Kcal_{\rm int}^4$; the only classical-quantum interaction which survives is the Hamiltonian contribution $\Kcal_{\rm int}^1$, linked to the flux of information from the classical system to the quantum one. So, we have the vanishing of all the terms which can transfer information from the quantum  to the classical system, which in principle could be observed without disturbance; again, also for the introduced hybrid Markovian quasi-free dynamics we have that no information can be extracted from the quantum system without dissipation.

Also the specular statement holds for the hybrid quasi-free dynamics: if the classical component extracts information
from the quantum one, it acquires necessarily a stochastic behaviour. Indeed, if we require the vanishing of the diffusive term in $\Kcal_{\rm cl}$ \eqref{K_cl} we have:
\begin{equation*}
A^{00}=0 \qquad \Rightarrow \qquad \begin{matrix} A^{01}={A^{10}}^\T=0, \quad B^{01}=-{B^{10}}^\T=0, \\ Z^{10}=0, \qquad \Kcal_{\rm int}^2=0, \quad  \Kcal_{\rm int}^3=0.\end{matrix}
\end{equation*}
To have no dissipation in $\Kcal_{\rm cl}$, also the compensated jump term in \eqref{K_cl} has to vanish; then, we have also $\Kcal_{\rm int}^4=0$. Again, only the interaction term $\Kcal_{\rm int}^1$, which  contains $H_x$, survives: a deterministic classical system can act only as a kind of control on the quantum system.

Before discussing the meaning and the properties of the general hybrid dynamics, it is useful to see the case of a pure quantum dynamical semigroup (Sec.\ \ref{sec:qds}) and the pure classical case (Sec.\ \ref{classical}). From the pure quantum case we can have some hints on the possible interactions among quantum components and with the external environment. In the pure classical case we can see that the dynamical semigroup does not give only the state at time $t$, the probability distribution at a single time, but a whole stochastic process can be constructed with all multi-time probabilities. This idea can be extended to the hybrid case and we can interpret the observation of the classical component as a monitoring in continuous time of the quantum system.

\section{Quasi-free quantum dynamical semigroups}\label{sec:qds}

By taking $s=0$ in Sec.\ \ref{Sec:qfSem} we obtain the most general quasi-free quantum dynamical semigroup. In this case we have   $S_t=\rme^{Z^{11}t}$, where $Z^{11}=Z$ is a generic $2n\times 2n$ real matrix; moreover, $A^{11}=A$ and the positivity condition \eqref{posmatr}  reduces to
\begin{equation}\label{A11+iB11}
A^{11}\pm \rmi B^{11}\geq 0, \qquad B^{11}=\frac 12 \left(\sigma Z^{11} -{Z^{11}}^\T \sigma^\T\right).
\end{equation}
Then, Eq.\ \eqref{Tqf} becomes
\begin{equation}\label{Tstrucq}
\Tcal_t[W_1(\zeta)]=\exp\left\{\int_0^t\rmd \tau \,\psi_1(S_\tau \zeta)\right\} W_1(S_t \zeta), \qquad \forall \zeta\in \Rbb^{2n}, \quad \forall t\geq 0;
\end{equation}
$\psi_1(\zeta)$ is obtained by particularizing \eqref{formpsi}, \eqref{propnu}:
\begin{equation}\label{psi1}\begin{split}
&\psi_1(\zeta)= \rmi \beta^\T \zeta -\frac 12 \,\zeta^\T A^{11} \zeta + \int_{\Rbb^{2n}}\nu_1(\rmd \zeta')\left(\rme^{\rmi {\zeta'}^\T  \zeta}-1-\rmi \ind_{\Sbb_1}(\zeta') {\zeta'}^\T \zeta\right),
\\
&\Sbb_1=\left\{ \zeta\in \Rbb^{2n}: \abs\zeta<1\right\}, \qquad \int_{\Sbb_1\setminus 0} \abs{\zeta'}^2\nu_1(\rmd \zeta')<+\infty, \qquad \nu_1(\{1\leq \abs \zeta' <+\infty\})<+\infty.\end{split}
\end{equation}

In the general hybrid case, we can obtain the dynamics of the quantum subsystem alone by taking $\xi^\T=(\zeta^\T,0)$ in \eqref{Tqf}. As seen in Sec.\ \ref{sec:reddyn}, the reduced quantum dynamics is not a semigroup and it depends on the dynamics of the classical system. The reduced quantum dynamics is a semigroup if and only if we have
\begin{equation*}
Z^{01}=0.
\end{equation*}
Under this assumption, we get $\psi(P_1\zeta)=\psi_1(\zeta)$ by identifying the generating  triplet:
\[
A^{11}=P_1AP_1, \qquad \nu_1(E)=\int_\Xi \nu(\rmd \eta)\, \ind_{E}(P_1\eta),
\qquad
\beta=P_1\alpha+ \int_{\{\eta\in \Xi: \abs{P_1\eta}^2<1, \,\abs{\eta}\geq 1\}}\nu(\rmd \eta)\,P_1\eta.
\]
A similar reduction problem is considered in \cite[Prop.\ 11.10]{Sato99}, in the pure classical case.

The generator $\Lcal_{\rm q}$ of the most general quasi-free quantum dynamical semigroup on $\Bcal(\Hcal)$ takes the form \eqref{L_q}, \eqref{Hamilq}.  We can say that $\Tcal_t$, given by \eqref{Tstrucq} and \eqref{psi1}, solves the Markovian quantum master equation (in the Heisenberg description) $\dot \Tcal_t[a]=\Tcal_t\big[\Lcal_{\rm q}[a]\big]$. By separating the Gaussian part from the compensated jump part in $\Lcal_{\rm q}$, we can write
\begin{subequations}\label{genL}
\begin{gather}
\Lcal_{\rm q}= \Lcal_1 +\Lcal_2, \\ \Lcal_1[a]=\rmi\left[H_{\rm q},a\right]+ \sum_{i,j=1}^{2n}\left[\sigma\left(A^{11}-\rmi B^{11}\right)\sigma^\T\right]_{ij}\left(R_iaR_j -\frac 12 \left\{R_iR_j,a\right\}\right),
\\ \label{ABD}
H_{\rm q}=\frac 12\sum_{i,j=1}^{2n} R_i D_{ij}^{11} R_j+ \sum_{i,j=1}^{2n}\beta_i\sigma_{ij}R_j, \quad D^{11}=\frac 12 \left(Z^{11}\sigma + \sigma^\T {Z^{11}}^\T\right), \quad
\beta_j\in \Rbb,
\\ \label{Ljump}
\Lcal_2[a]=\int_{\Rbb^{2n}}\nu_1(\rmd \zeta')\left( W_1(\sigma\zeta')^\dagger a W_1(\sigma\zeta')-a-\rmi \ind_{\Sbb_1}(\zeta')\left[{\zeta'}^\T\sigma R, a\right]\right).
\end{gather}
\end{subequations}

Let us note that $\Lcal_{\rm q}$ is an unbounded generator. It is known that many analytical problems are involved in the construction of the semigroup from a formal unbounded generator \cite{ChebF98,Hol95,SieHW17}; however, in the quasi-free case the semigroup is defined directly by its action on the Weyl operators \eqref{Tstrucq}, \eqref{psi1}, and these problems do not arise.

\begin{remark}\label{rem:long} By the positivity condition in \eqref{A11+iB11} we can write the matrix $\sigma\left(A^{11}-\rmi B^{11}\right)\sigma^\T $ in terms of its eigenvalues and eigenvectors:
\[
\left[\sigma\left(A^{11}-\rmi B^{11} \right)\sigma^\T\right]_{ji}=\sum_{l=1}^{2n}\lambda_l a_l^j\, \overline{a_l^i}, \qquad \lambda_l\geq 0, \qquad \sum_{i=1}^{2n} \overline{a_l^i}\, a_{l'}^i= \delta_{l l '}.
\]
Then, the Gaussian contribution can be written in a Lindblad-like form as
\[
\Lcal_1[a]=\rmi\left[H_{\rm q},a\right]+\sum_{l=1}^{2n}\left(L_{l}^\dagger aL_{l} -\frac12\left\{L_{l}^\dagger L_{l},a\right\} \right), \qquad L_l=\sqrt{\lambda_l} \sum_{i=1}^{2n} \overline{a_l^i}\, R_i.
\]
Similarly, by using $-a= -\frac 12 \left\{W_1(\sigma \zeta')^\dagger  W_1(\sigma\zeta'),a\right\}$ inside the integral in \eqref{L_q}, we see that the compensated jump contribution can be written as an integral of Lindblad-like expressions.
\end{remark}

\begin{remark}\label{rem:BDZ} If we start from the operator expression \eqref{genL}, with the matrices $B^{11}$ and $D^{11}$ given, we can obtain the action of the generator on the Weyl operators by  taking $Z^{11}$  from \eqref{Z:D+B}, which gives it
in terms of $B^{11}$ and $D^{11}$.

Note that $\beta$ and $D^{11}$ come out from the Hamiltonian contribution, while $A^{11}$ and $B^{11}$ come out from the dissipative part of the Gaussian contribution. Moreover, the compensated jump part does not contribute to the ``deterministic'' dynamics $S_t=\rme^{Z^{11} t}$.
\end{remark}

\subsection{The state dynamics} As already recalled in Sec.\ \ref{sec:Wigf} for the hybrid case,
a state $\rho$ is completely determined by its \emph{characteristic function} $\chi_\rho(\zeta)= \Tr \left\{W_1(\zeta)\rho\right\}$. The anti-Fourier transform of $\chi_\rho(\zeta)$ is the \emph{Wigner function} of $\rho$; when the Wigner function is non-negative, it is a probability density and we say that the corresponding state is ``classical''.

Let ${\Tcal}_{t*}$ be the preadjoint of ${\Tcal}_{t}$; then, $\rho_t={\Tcal}_{t*}[\rho_0]$ is the state at time $t$.  The characteristic function of the state at time $t$ is
\begin{equation}\label{pureq}
\chi_{\rho_t}(\zeta)=\Tr\left\{{\Tcal}_{t}[W_1(\zeta)]\rho_0\right\}=\exp\left\{\int_0^t\rmd \tau \,\psi_1(\rme^{Z^{11}\tau} \zeta)\right\} \chi_{\rho_0}(\rme^{Z^{11} t}\zeta).
\end{equation}
Note that the exponential pre-factor does not depend on the initial quantum state; by Remark \ref{+intpsi} this factor is the characteristic function of a probability law (which is non-negative). Therefore, when $\rho_0$ is a classical state, $\rho_t$ too turns out to be a classical state. When $\rho_0$  is non-classical, the product structure in the characteristic function implies a convolution structure in the Wigner function, which tends to eliminate non-classicality.

By using \eqref{translations} and \eqref{W*WW}, from the generator structure \eqref{genL} we get easily
\begin{gather*}
W_1( z)^\dagger\Lcal_1[W_1(z)W_1(\zeta)W_1( z)^\dagger]W_1( z)=\Lcal_1[W_1(\zeta)] -\rmi z^\T Z^{11}\sigma \zeta W_1(\zeta),
\\
W_1( z)^\dagger\Lcal_2[W_1( z)W_1(\zeta)W_1( z)^\dagger]W_1(z)=\Lcal_2[W_1(\zeta)].
\end{gather*}
So, we have that the jump component $\Lcal_2$ is \emph{Weyl covariant} \cite{Hol95,Hol96}. The Gaussian component $\Lcal_1$ enjoys the same covariance property only if $Z^{11}=0$, which means $D^{11}=0$ and $B^{11}=0$; then, $\Lcal_1$ takes the form
\begin{equation}\label{L1cov}
\Lcal_1[a]=\rmi\left[\beta^\T\sigma R,a\right]-\frac12
\sum_{i,j=1}^{2n} \left(\sigma A^{11}\sigma^\T\right)_{ji}\left[R_j,\left[ R_i ,a\right]\right].
\end{equation}
By Proposition 1 in \cite{Hol96} the most general Weyl covariant quantum dynamical semigroup turns out to be quasi-free and, indeed, the adjoint of the generator (2.12) in Proposition 2 of \cite{Hol96} coincides with the generator defined by the sum of \eqref{Ljump} and \eqref{L1cov}. The same generator was already found in \cite[Theor.\ 1]{Hol95}.

\begin{remark}[Existence of a unique equilibrium state]\label{rem:equi} If we have $\abs {\rme^{Z^{11}t}\zeta}\leq \rme^{-\varkappa t} \abs{\zeta}$, $\forall \zeta\in \Rbb^{2n}$, $\forall t\geq 0$, with $\varkappa>0$, and $\int_{\abs \zeta>1}\ln\abs \zeta \nu_1(\rmd \zeta)<+\infty$, by the results of Sec.\ \ref{sec:equil}, we have
\begin{equation*}
\chi_{\rho_{\rm eq}}(\zeta)=\lim_{t\to +\infty}\chi_{\rho_t}(\zeta)=\exp\left\{\int_0^{+\infty}\rmd \tau \,\psi_1(\rme^{Z^{11}\tau} \zeta)\right\}.
\end{equation*}
The equilibrium state has a non-negative Wigner function, given by an infinitely divisible probability distribution. Note that, by \eqref{Atpos}, we have
$
A_\infty \pm \frac\rmi 2 \,\sigma\geq 0$, and the equilibrium state is not singular.
\end{remark}

By the property \eqref{WtoR} of the Weyl operators, we get that the quantum means  are given by
\begin{equation}\label{qmeans}\begin{split}
&\langle R\rangle_t={S_t}^\T\langle R\rangle_0+\int_0^t \rmd \tau \,{S_\tau}^\T\tilde \beta, \qquad  \frac{\rmd \ }{\rmd t}\langle R\rangle_t={Z^{11}}^\T\langle R\rangle_t+\tilde \beta,
\\
&\langle R_j\rangle_t := \Tr\left\{R_j\rho_t\right\},  \qquad \tilde \beta:=\beta+ \int_{\abs{\zeta}\geq 1}\zeta\nu_1(\rmd \zeta).
\end{split}\end{equation}
The quantum means exist when $\langle R_j\rangle_0$ is finite and  the integral in the definition of $\tilde \beta$ exists.

\subsection{Examples}

Quasi-free quantum dynamical semigroups, with a non-vanishing jump part, already appeared in the literature, mainly under symmetry requirements; for instance, the Galilean covariant evolutions considered in \cite[Sec.\ III]{Hol96} turn out to be quasi-free semigroups.
When less symmetry is considered, the resulting semigroup is not necessarily quasi-free, but jump structures are included. For instance, in \cite[Theor.\ 2]{Hol95} the most general generator of a space-translation covariant semigroup is obtained; in this case to ask the semigroup to be quasi-free is a restriction.

Under the name of \emph{quantum linear Boltzmann equation}, translation covariant quantum dynamical semigroups have been used to study the motion of a test particle in a gas and the so called dynamical decoherence \cite{Vac08,VacH09}.

\subsubsection{Quasi-free quantum linear Boltzmann equation}
As a first example, we give here the quasi-free version of the quantum linear Boltzmann equation studied in \cite{VacH09}.
We take $n=3$ and consider the case of covariance under spatial translations and rotations.

We use a notation with $3\times 3$-blocks (position variables and momentum variables) and we take
\[
{Z^{11}}^\T=\begin{pmatrix}0& \frac \openone m\\ 0 & -\gamma\openone \end{pmatrix}, \qquad A^{11}=\begin{pmatrix}a_1\openone & a_3\openone \\ a_3\openone& a_2\openone\end{pmatrix}, \qquad \beta=0,
\]
\[
a_1\geq 0, \qquad a_2 \geq 0, \qquad a_1 a_2 \geq a_3^{\;2}+ \frac {\gamma^2}4, \qquad m>0, \qquad \gamma> 0;
\]
this gives also
\[
D^{11}=\begin{pmatrix}0& \frac\gamma 2 \,\openone\\ \frac\gamma 2 \,\openone& \frac \openone m\end{pmatrix}, \qquad B^{11}=\frac \gamma 2 \begin{pmatrix} 0& - \openone\\  \openone& 0 \end{pmatrix},\qquad A^{11}\pm \rmi B\geq 0, \] \[
S_t= \begin{pmatrix} \openone& 0\\ \frac{1-\rme^{-\gamma t}}{m\gamma}\,\openone &\rme^{-\gamma t}\openone  \end{pmatrix}.
\]
Then, we take the measure $\nu_1$ to be rotationally invariant; as pointed out in \cite{Hol96} after Eq.\ (3.14), the compensator formally vanishes and we can write
\[
\psi_1(\zeta)= -\frac 12 \,\zeta^\T A^{11} \zeta + \lim_{\epsilon \downarrow 0}\int_{\abs{\zeta'}>\epsilon}\nu_1(\rmd \zeta')\left(\rme^{\rmi \left(\zeta^\T\zeta'\right)}-1\right).
\]
Finally, we ask that the quantum expectations $\langle R_j\rangle_t$ exist.

In this example  there is not a stationary state at large times and, moreover, the memory of the initial state is not lost. However, the quantum distribution of the momentum reaches a stationary expression and looses any  contribution of the initial state. Indeed, by using the improper eigenfunctions $| p\rangle$ of the momentum operators, the momentum density at time $t$ is given by
\[
\langle p|\rho_t| p\rangle= \frac 1 {(2\pi)^3}\int_{\Rbb^3} \rmd k\, \rme^{-\rmi  k^\T p}\chi_{\rho_t}(0,\,k) .
\]
Then, we get
\[
\lim_{t\to+\infty}\chi_{\rho_t}(0,\,k)=\exp\left\{-\frac{a_2\abs{k}^2}{4\gamma} +\int_{\Xi}\nu_1(\rmd \zeta')\int_0^{+\infty}\rmd \tau\left(\exp\left\{\rmi (0,k^\T) \zeta'\rme^{-\gamma\tau}\right\}-1\right)\right\},
\]
which is the characteristic function of an infinitely divisible distribution. Let us note that  $\gamma>0$ implies $a_2>0$ and the Gaussian contribution cannot vanish.

\subsubsection{An optomechanical system}\label{sec:optomec}
We consider now a quantum harmonic oscillator with dissipation. The choice of the dynamics depends on the physical system we want to describe, an optical mode in a cavity or a micro-mechanical system \cite{Vac02,Bar16,BarG21}. Here we take as example
an oscillating micro-mirror, hit by free photons (not in a cavity) and damped by a phonon bath \cite{Bar16,BarG21}.  The interaction between the mechanical mirror and the photons is by the radiation pressure. The micro-mirror is a mechanical system; at least at the level of the quantum means, the forces must appear only in the equation for the momentum.

Let us take $n=1$; we assume the existence of the quantum expectations of $Q$ and $P$. By the previous considerations we want to have $\frac{\rmd\ }{\rmd t}\langle Q\rangle_t= \Omega \langle P\rangle_t$; by \eqref{qmeans} we must have ${Z^{11}}^\T_{11}=0$.  Moreover, we choose the underdamped case; the dynamical matrix ${Z^{11}}^\T$ and its eigenvalues are given by the expressions
\begin{equation}\label{lambda_pm}\begin{split}
{Z^{11}}^\T = \begin{pmatrix}0 & \Omega\\ -\Omega& -\gamma \end{pmatrix}, \qquad \gamma>0,\qquad \Omega>0, \qquad \Omega^2> \frac{\gamma^2}4 , \\ \lambda_\pm=-\frac \gamma 2 \pm \rmi \omega, \qquad \omega=\sqrt{\Omega^2-\frac{\gamma^2}4}>0.
\end{split}\end{equation}
Then, we have
\[
D^{11}=\frac 12 \left(Z^{11}\sigma +\sigma^\T{Z^{11}}^\T\right)=\begin{pmatrix} \Omega& \frac\gamma 2\\ \frac\gamma 2& \Omega \end{pmatrix}, \qquad B^{11} =\frac12 \left( \sigma Z^{11}- {Z^{11}}^\T\sigma^\T \right)= - \frac\gamma 2\,\sigma.
\]
The matrix $A^{11}$ cannot have vanishing eigenvalues because of the positivity condition \eqref{A11+iB11}, which gives
\[
A^{11}=\begin{pmatrix}a_1 & a_3 \\ a_3& a_2\end{pmatrix}, \qquad a_1>0, \qquad a_2>0, \qquad a_1a_2\geq a_3^{\,2}+\frac {\gamma^2}4.
\]
The jump contribution has to give rise only to the radiation pressure force (momentum kicks); so, we take
the measure $\nu_1$ to be concentrated on the second component:
\[
\int_{\Rbb^{2}}\nu_1(\rmd \zeta')f(\zeta_1',\zeta_2')=\int_{\Rbb}m_1(\rmd v)f(0,v).
\]
Apart from damping and radiation pressure, no other force must be present; so, we take also
$\beta_1=0$ and $\beta_2=\int_{\{\abs v <1\}} m_1(\rmd v)$, which we assume to exists.  Then we have
\[
\tilde \beta_1=0, \qquad \tilde\beta_2= \int_{\Rbb}v m_1(\rmd v),
\qquad
\psi_1(\zeta)=-\frac 12\, \zeta^\T A^{11} \zeta+\int_{\Rbb}m_1(\rmd v)\left(\rme^{\rmi v \zeta_2} - 1 \right).
\]

The  generator \eqref{genL} reduces to
\[
\Lcal_{\rm q}[a]=\rmi\left[H_{\rm q},a\right]+\sum_{i,j=1}^{2} \left(\sigma A^{11}\sigma^\T+ \frac{\rmi \gamma}2\,\sigma \right)_{ij}\left(R_ia R_j -\frac12\left\{R_iR_j,a\right\}\right)
+ \int_{\Rbb}m_1(\rmd v)\left( \rme^{-\rmi v Q} a \rme^{\rmi v Q}- a\right),
\]
\[
R_1=Q, \quad R_2=P, \qquad H_{\rm q}=\frac 12\sum_{i,j=1}^{2} R_i D_{ij}^{11} R_j= \frac\Omega 2 \left(Q^2+P^2\right)+\frac \gamma 4 \left\{Q,P\right\}.
\]
An interesting point is that the damping $\gamma$ comes out from a combined action of the Hamiltonian term (through $D^{11}$) and of the dissipative Gaussian term (through $B^{11}$).

The dynamical evolution $S_t^\T$ is the same as for a classical damped oscillator:
\begin{equation}\label{xi1,2}
S_t^\T=\frac{\rme^{\lambda_+ t}}{2\rmi \omega} \begin{pmatrix} -\lambda_- &\Omega \\ -\Omega & \lambda_+ \end{pmatrix} + \frac{\rme^{\lambda_- t}}{2\rmi \omega} \begin{pmatrix} \lambda_+ &-\Omega \\ \Omega & -\lambda_- \end{pmatrix}.
\end{equation}
Then, the characteristic function of the state at time $t$ is given by \eqref{pureq}. As $S_t$ decays exponentially to $0$ and the first moments are assumed to exist, the equilibrium state exists and does not depend on the initial state:
\begin{equation}\label{eqcf}
\chi_{\rho_{\rm eq}}(\zeta)=\lim_{t\to + \infty} \chi_{\rho_{t}}(\zeta)\\ {}= \exp\left\{\int_0^{+\infty}\rmd \tau \left[ -\frac 12\left( S_\tau\zeta\right)^\T \sigma A^{11} \sigma^\T S_\tau\zeta+\int_{\Rbb}m_1(\rmd v)\left(\rme^{\rmi v \left(S_\tau\zeta\right)_2} - 1 \right) \right]\right\},
\end{equation}
By taking the derivatives of \eqref{eqcf} with respect to $\zeta_1$, $\zeta_2$, the quantum moments of the equilibrium state can be computed. We assumed the existence of the first moments; the existence of higher moments depends on the properties of the measure $m_1$.

\section{The pure classical case}\label{classical}

Let us take $n=0$; then $S_t=\rme^{Z^{00}t}$, where $Z^{00}=Z$ is a generic $s\times s$ real matrix. Moreover, $A^{00}=A$ and condition \eqref{posmatr} reduces to the fact that $A^{00}$ is a real positive-semidefinite $s\times s$-matrix. Now, Eq.\ \eqref{Tqf} becomes
\begin{equation}\label{Tstrucclas}
\Tcal_t[W_0(k)]=\exp\left\{\int_0^t\rmd \tau \,\psi_0(S_\tau k)\right\} W_0(S_t k), \qquad \forall k\in \Rbb^s, \quad \forall t\geq 0,
\end{equation}
where the structure of $\psi_0$ \eqref{formpsi} reduces to the usual LK-formula 
\begin{equation}\label{psicl}\begin{split}
&\psi_0(k)=\rmi k^\T \alpha^0-\frac 12 \,k^\T A^{00} k+\int_{\Rbb^s}\nu_0(\rmd y)\left(\rme^{\rmi y^\T k}-1-\rmi \ind_{\Sbb_0}(y) y^\T k\right) ,
\\
& \forall k\in \Xi_0=\Rbb^{s}, \qquad \alpha^0\in \Rbb^s, \qquad \Sbb_0=\left\{ y\in \Rbb^{s}: \abs y<1\right\}.
\end{split}\end{equation}
Conditions \eqref{propnu} hold for $\nu_0$.

Similarly to what is done at the beginning of Sec.\ \ref{sec:qds},
the same result can be obtained in the general hybrid case, with the assumption
\begin{equation*}
Z^{10}=0.
\end{equation*}
This gives $S_tP_0=P_0S_tP_0= \rme^{Z^{00}t}P_0$.  Then, equations \eqref{Tqf}, \eqref{formpsi} give \eqref{Tstrucclas}, \eqref{psicl} for $\xi=(0,k)$. Indeed, we get $\psi(P_0k)=\psi_0(k)$ by identifying the generating  triplet:
\[
A^{00}=P_0AP_0, \qquad \nu_0(E)=\int_\Xi \nu(\rmd \eta)\, \ind_E(P_0\eta),
\] \[
\alpha^0=P_0\alpha+ \int_{\{\eta\in \Xi: \abs{P_0\eta}<1,\, \abs{\eta}\geq 1\}}\nu(\rmd \eta)\,P_0\eta;
\]
$E$ is any Borel subset of $\Xi_0$.

The generator of the reduced classical dynamics takes the form \eqref{K_cl}.
Semigroups like $\Tcal_t$ \eqref{Tstrucclas} are well known in the theory of classical stochastic processes; they are semigroups of transition probabilities of time-homogeneous Markov processes \cite{Sato99,App09};  $\dot \Tcal_t[f]=\Tcal_t\big[\Kcal_{\rm cl}[f]\big]$ is a version of the Kolmogorov-Fokker-Planck equation \cite[Secs.\ 3.5.2, 3.5.3]{App09}.

\subsection{The probabilities}\label{sec:clprobs}
In the classical case the initial condition is a probability distribution, having a density: $P_0(E)=\int_{E} \rmd x\, p_0(x)$, \quad $p_0\in L^1(\Rbb^s)$, \quad $p_0\geq 0$, \quad $\int_{\Rbb^s} \rmd x \,p_0(x)=1$. Moreover, the duality form reduces to an expectation:
$\Ebb_0[f]=\int_{\Rbb^s} \rmd x\, p_0(x)f(x)$, \quad $\forall f\in L^\infty(\Rbb^s)$.

\subsubsection{State evolution}
The semigroup ${\Tcal_t}_*$ gives the probability distribution at time $t$: \ $\Ebb_t[f]=\int_{\Rbb^s} \rmd x\, p_t(x)f(x)= \Ebb_0\big[\Tcal_t[f]\big]$, \quad $\forall f\in L^\infty(\Rbb^s)$.
For $f=W_0(k)$ we get $\Ebb_t[W_0]=\int_{\Rbb^s} \rmd x\, p_t(x)\rme^{\rmi k^\T x}=: \hat p_t(k)$, the Fourier transform of $p_t$:
\begin{equation}\label{FTpt}
\hat p_t(k)= \Ebb_0\big[\Tcal_t[W_0(k)]\big]=\int_{\Rbb^s} \rmd y\, p_0(y)\exp\left\{\int_0^t\rmd \tau \,\psi_0(S_\tau k)\right\}\rme^{\rmi y^\T S_t k}=\exp\left\{\int_0^t\rmd \tau \,\psi_0(S_\tau k)\right\}\hat p_0(S_tk).
\end{equation}

As the initial probability distribution has a density, we have that $\hat p_0(S_tk)$ is the characteristic function of a distribution admitting a density. By Remark \ref{+intpsi} and Proposition \ref{prop:tripl=int}, $\exp\left\{\int_0^t\rmd \tau \,\psi_0(S_\tau k)\right\}$ is the characteristic function of an infinitely divisible distribution. Then, the product $\hat p_t(k)$ is the characteristic function of a distribution with density, given by
\begin{equation}\label{1tdens}
p_t(x)= \frac 1 {(2\pi)^s}\int_{\Rbb^s}\rmd k\,\rme^{-\rmi x^\T k}\hat p_t(k)=\int_{\Rbb^s} \rmd y\, p_0(y)\frac 1 {(2\pi)^s}\int_{\Rbb^s}\rmd k\,\rme^{\rmi \left(S_t^\T y-x\right)^\T k}\exp\left\{\int_0^t\rmd \tau \,\psi_0(S_\tau k)\right\}.
\end{equation}

\subsubsection{Louville equation}\label{sec:liou}
Here we show that, in the case of a pure Hamiltonian system, the evolution equation for the probability density \eqref{1tdens} reduces to the Liouville equation.

We consider a classical system with $s=2m$, and canonical coordinates $x_i=q_i$, $x_{m+i}=p_i$, $i=1,\ldots m$. We take the case of no noise: in \eqref{K_cl} we have
\begin{equation}\label{nonoise}
A^{00}=0, \qquad \nu(\rmd \eta)=0.
\end{equation}
We write $p_t(x)= p_{\rm cl}(q,p,t)$ and we use the duality form to find the preadjoint generator acting on $p_{\rm cl}$. By  integration by parts, we shift the derivatives appearing in the generator \eqref{K_cl} from $f$  to $p_{\rm cl}$ and  we get
\begin{multline*}
\frac{\partial p_{\rm cl}(q,p,t)}{\partial t}= -\sum_{i=1}^{2m}Z_{ii}^{ 00}-\sum_{j=1}^m \left(\alpha^0_j\, \frac{\partial p_{\rm cl}(q,p,t)} {\partial q_j}+ \alpha^0_{j+m}\, \frac{\partial p_{\rm cl}(q,p,t)} {\partial p_j}\right)
\\
-\sum_{i,j=1}^m \left[\left(q_iZ^{00}_{ij}+p_iZ^{00}_{i+m,j}\right) \frac{\partial p_{\rm cl}(q,p,t)} {\partial q_j}+ \left( q_iZ^{00}_{i,j+m}+p_iZ^{00}_{i+m,j+m}\right) \frac{\partial p_{\rm cl}(q,p,t)} {\partial p_j}\right].
\end{multline*}
To reduce this equation to an Hamiltonian evolution, we have to take
\begin{equation}\label{conservd}
Z_{i+m,j}^{00}=Z_{j+m,i}^{00},\quad Z_{i,j+m}^{00}=Z_{j,i+m}^{00}, \quad Z_{ji}^{00}=-Z_{i+m,j+m}^{00}, \quad i,j=1,\ldots, m.
\end{equation}
Now, we introduce the classical Hamiltonian
\begin{equation}\label{clHam}
H_{\rm cl}(q,p)=\sum_{i=1}^m\left(\alpha^0_ip_i -\alpha^0_{i+l}q_i\right)+\sum_{i,j=1}^m\left(\frac 12 \, p_iZ_{i+m,j}p_j + \frac 12 \, q_iZ_{i,j+m}q_j+ q_iZ_{i,j}p_j\right);
\end{equation}
the Hamiltonian is of second order because we started with a quasi-free dynamics.
Then, the evolution equation above reduces to
\begin{equation}\label{Liou}
\frac{\partial  p_{\rm cl}(q,p,t)}{\partial t}+ \left\{ p_{\rm cl}(q,p,t),H_{\rm cl}(q,p),\right\}_{\rm cl}=0,
\end{equation}
where $\{\cdot,\cdot\}_{\rm cl}$ denotes the Poisson brackets. Eq.\ \eqref{Liou} is the Liouville equation with Hamiltonian \eqref{clHam}; note that we asked to have a conservative dynamics \eqref{conservd} and no noise \eqref{nonoise}.

\subsubsection{Transition probabilities}\label{sec:conprob}

Let us go back to the general case \eqref{Tstrucclas}, \eqref{psicl}.
The important point of the classical case is that the dynamical semigroup does not give only the probability distribution \eqref{1tdens} at a single time, but also all the multi-time probabilities.

By using \cite[Eq.\ (3.3)]{App09} we can identify the transition probabilities by
\begin{equation}\label{T>condprob}
\Tcal_t[f](x)=\int_{\Rbb^s}f(y) P_t(\rmd y|x).
\end{equation}
By \eqref{1tdens} the characteristic function of this transition probability turns out to be
\begin{equation}\label{FTcond}
\int_{\Rbb^s}\rme^{\rmi k^\T y}P_t(\rmd y|x)=\rme^{\rmi k^\T S_t^\T x}\exp\left\{\int_0^t\rmd \tau \,\psi_0(S_\tau k)\right\};
\end{equation}
again, these transition probabilities are infinitely divisible.
By construction, the transition probabilities satisfy the Chapman-Kolmogorov identity (for the time-homogeneous case) \cite[(10.1)]{Sato99}, \cite[Theor.\ 3.1.5 and (3.6)]{App09}:
\begin{equation}\label{CKeq}
\int_{\Rbb^s} \,P_t(C|y)P_r(\rmd y|x)=P_{t+r}(C|x).
\end{equation}

Having the transition probabilities, also joint probabilities at different times can be introduced (for every choice of a finite number of times); then, by Kolmogorov's extension theorem \cite[Theor.\ 1.8]{Sato99}, it is possible to construct the probability measure for a stochastic process $X(t)$ in continuous time.

\subsection{The process}\label{sec:clprocess}
Let $B(t)$ be the L\'evy process with characteristic function for the increments
\begin{equation}\label{Lprocess}
\Ebb\left[ \exp\left\{ \rmi k^\T\left(B(t+\Delta t)-B(t)\right)\right\}\right]=\exp\left\{\psi_0(k) \Delta t\right\}.
\end{equation}
A L\'evy process has independent and stationary increments, $B(0)=0$, and it is \emph{stochastically continuous} \cite[Sec.\ 1.3]{App09}.
Let us recall that L\'evy processes can be used to define a class stochastic integrals \cite[Sec.\ 4.3]{App09}.

The process $X(t)$ associated with the semigroup $\Tcal_t$ is a Markov process, which satisfies the stochastic differential equation
\begin{subequations}\label{process}
\begin{equation}
\rmd X(t)= {Z^{00}}^\T X(t)\,\rmd t   +\rmd B(t),
\end{equation}
whose solution is
\begin{equation}
X(t)= {S_t}^\T X(0)+\int_0^t {S_{t-\tau}}^\T\rmd B(\tau);
\end{equation}
\end{subequations}
the initial condition $X(0)$ is independent of the process $B$ and it has distribution with density $p_0$. These processes are sometime called of \emph{Ornstein-Uhlenbeck type} \cite{SatoY84}, and they are a sub-class of the processes with independent increments \cite[p.\ 43]{App09},
\cite[Property (1) in Def.\ 1.6]{Sato99}.

To check this result, we can show that the transition probabilities associated with the process  \eqref{process} coincide with the transition probabilities \eqref{T>condprob}, \eqref{FTcond}. From \eqref{process} we get
\[
X(t)= {S_{t-t_0}}^\T X(t_0)+\int_{t_0}^t {S_{t-\tau}}^\T\rmd B(\tau).
\]
By working with the characteristic functions and using \eqref{FTcond}, we have
\begin{multline*}
\Ebb\left[\rme^{\rmi k^\T X(t)}\Big|X(t_0)=z\right]=\exp\left\{\rmi z^\T S_{t-t_0}k\right\}\Ebb\left[\exp\left\{\rmi \int_{t_0}^t \left(S_{t-\tau}k\right)^\T\rmd B(\tau)\right\}\right]
\\ {}=\exp\left\{\rmi z^\T S_{t-t_0}k+\int_{t_0}^t\rmd \tau\, \psi_0(S_{t-\tau}k)\right\} =\exp\left\{\rmi z^\T S_{t-t_0}k+\int_{0}^{t-t_0}\rmd \tau\, \psi_0(S_{\tau}k)\right\} =\int_{\Rbb^s}\rme^{\rmi k^\T y}P_{t-t_0}(\rmd y|z);
\end{multline*}
this result  gives the identification.

Let us recall that the L\'evy process $B(t)$ can be represented in terms of Wiener processes and random Poisson measures (the L\'evy-It\^o decomposition \cite[Theor.\ 2.4.16]{App09}).

\subsubsection{The mean values}
By assuming the convergence of the integral
\[
\int_{\Rbb^s}\abs y \nu_0(\rmd y)<+\infty,
\]
we get the existence of  the mean values, which are given by
\[
\Ebb[X_j(t)]=\int_{\Rbb^s}\rmd x\, x_jp_t(x)=-\rmi \frac{\partial \hat p_t(k)}{\partial k_j}\Big|_{k=0} = \left( S_t^\T\Ebb[X(0)]\right)_j-\rmi \int_0^t \rmd \tau\, \frac{\partial \psi_0(S_\tau k)}{\partial k_j}\Big|_{k=0},
\]
\[
-\rmi \, \frac{\partial \psi_0(S_\tau k)}{\partial k_j}\Big|_{k=0}=\left({S_\tau}^\T \alpha^0\right)_j+ \int_{\abs{y}\geq 1}\nu_0(\rmd y)\left( {S_\tau}^\T y\right)_j.
\]
Then, we can write \eqref{psicl} as
\[
\psi_0(k)=\rmi\tilde\alpha^{0\,\T} k +\tilde \psi_0(k), \qquad \tilde \eta=\eta^0-\int_{\Sbb_0}y\nu_0(\rmd y), \]
\[ \tilde \psi_0(k)=-\frac 12 \,k^\T A^{00} k+\int_{\Rbb^s}\nu_0(\rmd y)\left(\rme^{\rmi y^\T k}-1\right).
\]

Now Eqs.\ \eqref{process} become
\begin{equation*}\begin{split}
&\rmd X(t)= {Z^{00}}^\T X(t)\rmd t   +\tilde\eta \rmd t+ \rmd \tilde B(t), \\ &X(t)= {S_t}^\T X(0)+\int_0^t {S_{t-\tau}}^\T\tilde \eta \rmd \tau +\int_0^t {S_{t-\tau}}^\T\rmd \tilde B(\tau);
\end{split}\end{equation*}
the increments of $\tilde B(t)$ have characteristic function
\begin{equation*}
\Ebb\left[ \exp\left\{ \rmi k^\T\left(\tilde B(t+\Delta t)-\tilde B(t)\right)\right\}\right]=\exp\left\{\tilde \psi_0(k) \Delta t\right\}.
\end{equation*}
Note that the jump contribution in $\tilde \psi_0$ is not compensated.

\subsection{An example: a dissipative harmonic oscillator}
Let us take $s=2$ and set $x(t)=X_1(t)$, \ $p(t)=X_2(t)$.  We take the deterministic part of the dynamics to be equal to the quantum case of Sec.\ \ref{sec:optomec}:  the generator of $S_t^\T$ is given by Eq.\ \eqref{lambda_pm} with ${Z^{11}}^\T\to {Z^{00}}^\T$. Then, $S_t^\T$ is given by \eqref{xi1,2}.
Moreover, in the classical case there is no restriction on the matrix $A^{00}$, apart from being non-negative definite. So, we are allowed to take $A_{11}^{00}=0$ and we can restrict also the noise and the forces to the second component (as it must be in classical equations of motion):
\begin{equation}\label{A22}
\tilde\eta_1=0, \qquad \tilde \psi_0(k)=-\frac 12  A_{22}^{00} {k_2}^2+\int_{\Rbb}m_0(\rmd v)\left(\rme^{\rmi v k_2}-1\right).
\end{equation}
Note that this gives $\tilde B_1(t)=0$ and
\begin{equation}\label{cl-a-osc}
\begin{split}
&\rmd x(t)= \Omega p(t)\rmd t,
\\
&\rmd p(t) = -\Omega x(t)\rmd t -\gamma p(t) \rmd t + \tilde\eta_2\rmd t+\rmd \tilde B_2(t).
\end{split}
\end{equation}
All the forces appear only in the second equation, while the first equation gives the meaning and the units of measure of the momentum: $p(t)=\dot x(t)/\Omega$.
We assume also the existence of the mean values; so, we have
\begin{gather*}
\frac{\rmd \langle x(t)\rangle}{\rmd t}= \Omega \langle p(t)\rangle,
\qquad
\frac{\rmd \langle p(t)\rangle}{\rmd t} = -\Omega \langle x(t)\rangle  -\gamma \langle p(t) \rangle + \tilde\eta_2+\frac {\rmd \ } {\rmd t} \langle \tilde B_2(t)\rangle,
\\
\frac {\rmd \ } {\rmd t} \langle \tilde B_2(t)\rangle=\int_{\Rbb}y\tilde\nu_0(\rmd y).
\end{gather*}
The last term is the mean pressure force due to the jump noise; if the noise is symmetric in both sides, the mean pressure force can vanish.

Similarly to \eqref{lambda_pm} and \eqref{xi1,2}, Eqs.\ \eqref{cl-a-osc} can be explicitly solved and the characteristic function \eqref{FTcond} of the transition probabilities can be computed. In particular, we get that it exists the limit for large times of the probability distribution at time $t$: from \eqref{FTpt} we get
\begin{equation*}
\lim_{t\to +\infty}\hat p_t(k)=\exp\left\{\int_0^{+\infty}\rmd \tau \left( \rmi k^\T S_\tau^\T\tilde \eta +\tilde \psi_0(S_\tau k)\right)\right\}.
\end{equation*}
This is the characteristic function of an infinitely divisible distribution. By the choice \eqref{A22} the matrix in the Gaussian component of the generating triplet \eqref{genetrpl} turns out to be
\[
A_\infty^{00}= A^{00}_{22}\int_0^{+\infty}\rmd \tau \,S_\tau^\T \begin{pmatrix}0 & 0 \\ 0 &1\end{pmatrix} S_\tau =\frac {A^{00}_{22}}{2\gamma}\begin{pmatrix}1 & 0 \\ 0 &1\end{pmatrix}.
\]
For $A^{00}_{22}>0$, we have  $\det A_\infty^{00}>0$ and the Gaussian component is not degenerate; then, the probability distribution with characteristic function $\lim_{t\to +\infty}\hat p_t(k)$ admits a density with respect to the Lebesgue measure.

\section{A generic hybrid system}\label{sec:hybrid}

The semigroup $\Tcal_t$, constructed in Theor.\ \ref{mainTheor}, acts on the $W^*$-algebra $\Nscr$ and it represents the dynamics in the Heisenberg description. Then, if $\hat\pi_0$ is the state at time $t=0$, its evolution is given by the preadjoint semigroup: $\hat\pi_t= {\Tcal_t}_*[\hat\pi_0]$. Let us recall that a generic element $F\in \Nscr$ is a function $F(x)$ from $\Rbb^s$ into $\Bscr(\Hscr)$; then,
the state $\hat\pi_t$ is a trace-class valued function $\hat\pi_t(x)\in \Tscr(\Hscr)$, $x\in \Rbb^s$, such that $\hat\pi_t(x)\geq 0$ and $\int_{\Rbb^s}\rmd x\, \Tr\{\hat\pi_t(x)\}=1$.

The characteristic function of a hybrid state has been introduced in \eqref{char+Wig}; similarly, the characteristic function of the trace-class operator $\hat\pi_t(x)$ can be defined:
\begin{equation*}\begin{split}
&\chi_{\hat\pi_t}(\xi)= \int_{\Rbb^s} \rmd y\,\Tr\left\{\hat\pi_t(y)W(\xi)(y)\right\}, \\ &\chi_{\hat\pi_t(x)}(\zeta)=  \Tr\left\{\hat\pi_t(x)W_1(\zeta)\right\}=\frac 1{(2\pi)^s}\int_{\Rbb^s}\rmd k \,\rme^{-\rmi x^\T k} \chi_{\hat\pi_t}(\zeta,k).
\end{split}\end{equation*}
By using the explicit form \eqref{f+Psi} giving the action of $\Tcal_t$ on the Weyl operators, we get
\[
\chi_{\hat\pi_t}(\xi)= \int_{\Rbb^s} \rmd x\,\Tr\left\{\hat\pi_0(x)W(S_t\xi)(x)\right\}\rme^{\Psi_t(\xi)} \\ {}= \int_{\Rbb^s} \rmd x\,\chi_{\hat\pi_0(x)}(P_1S_t\xi)\exp\left\{\Psi_t(\xi)+\rmi x^\T P_0S_t\xi\right\}.
\]

Let $X(t)$ be the random variable representing the classical component at time $t$, as in
Sec.\ \ref{sec:clprocess}. Then, we have
\begin{equation}\label{margcl}
P[X(t)\in B]=\int_{B}\rmd x\,p_t(x), \qquad p_t(x)=\Tr\left\{\hat\pi_t(x)\right\}=\chi_{\hat\pi_t(x)}(0),
\end{equation}
while the reduced quantum state $\rho_t$ is given by
\begin{equation}\label{reducstate}\begin{split}
\rho_t&=\int_{\Rbb^s} \rmd x\,\hat\pi_t(x), \\ \chi_{\rho_t}(\zeta)=\chi_{\hat\pi_t}(\zeta,0)&=\int_{\Rbb^s} \rmd x\,\chi_{\hat\pi_0(x)}(P_1S_tP_1\zeta)\exp\left\{\Psi_t(\zeta,0)+\rmi x^\T P_0S_tP_1\zeta\right\}.
\end{split}\end{equation}
The  expressions \eqref{margcl} and \eqref{reducstate} can be seen as the marginals of the classical/quantum state $\hat\pi_t$.

In the pure classical case of Sec.\ \ref{sec:clprobs}, we have seen that the dynamical semigroup gives rise not only to the probabilities at time $t$ (the state at time $t$), but also to the joint probabilities at different times and to a whole stochastic process in continuous time.
In a quantum theory probabilities and state changes are unified in the notion of \emph{instrument} \cite[Sec.\ 4.1]{Hol01}, \cite[Sec.\ B.4]{BarG09}, \cite[Sec.\ 1.4]{WisM10}. As in the pure classical case transition probabilities are needed (see Sec.\ \ref{sec:conprob}) to construct the related process, in the hybrid system we need to introduce a kind of ``transition'' instruments \cite{BarH95}.

\subsection{Instruments and probabilities}\label{sec:I+prob}
In the theory of measurements in continuous time, the classical component is the output of the measurement (usually without a proper dynamics and backaction on the quantum system) \cite{BarL91,BarHL93,BarP96}; in this situation,   the connection between the semigroup $\Tcal_t$ and the instruments is given by the equation \cite[(2.5)]{BarHL93}. In the classical case we have the transition probabilities \eqref{T>condprob}; by analogy, we need to introduce instruments depending on the initial value of the classical system, some kind of \emph{transition instruments} as done in \cite[Sec.\ 4.4]{BarH95}. So, we define the family of instruments
\begin{equation}\label{I|B}
\Ical_t(E|x)[a]=\Tcal_t[a\otimes \ind_E](x), \qquad \forall a\in\Bscr(\Hscr), \qquad \forall x\in \Rbb^s;
\end{equation}
$E\subset \Rbb^s$ is a generic Borel set.

\begin{proposition}\label{prop:transinstr}
Equation \eqref{I|B} actually defines an instrument $\Ical_t(\cdot|x)$ on the $\sigma$-algebra of the Borel sets in $\Rbb^s$, which means that the following properties hold:
\begin{enumerate}
\item for every Borel set $E\subset \Rbb^s$ and  $x\in \Rbb^s$ (almost everywhere with respect to the Lebesgue measure), $\Ical_t(E|x)$ is a completely positive and normal map from $\Bscr(\Hscr)$ into itself;
\item (normalization) $\Ical_t(\Rbb^s|x)[\openone]=\openone$;
\item ($\sigma$-additivity) for every countable family of Borel disjoint sets $E_i$, $\Ical_t\left(\bigcup_i E_i\big| x\right)[a]=\sum_i\Ical_t(E_i|x)[a]$, $\forall a\in \Bscr(\Hscr)$.
\end{enumerate}

Moreover, the family of instruments \eqref{I|B} enjoys  the following composition property:
\begin{equation}\label{qCKeq}
\Ical_{t+t'}(E|x)=\int_{z\in \Rbb^s} \Ical_{t'}(\rmd z|x)\circ\Ical_t(E|z).
\end{equation}

Finally,  the action of the transition instrument on the Weyl operators is given by
\begin{equation}\label{trans+Weyl}
\Ical_t(E|x)[W_1(\zeta)]= \frac 1 {(2\pi)^s}\int_{\Rbb^s}\rmd k \int_E \rmd z \,\exp\left\{ \Psi_t(\xi)-\rmi z^\T k\right\}W(S_t\xi)(x), \quad \xi=\begin{pmatrix} \zeta \\ k \end{pmatrix}.
\end{equation}
\end{proposition}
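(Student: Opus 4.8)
The plan is to read every claim off the defining identity $\Ical_t(E|x)[a]=\Tcal_t[a\otimes\ind_E](x)$ by factoring $\Ical_t(E|x)$ through three elementary maps whose properties are already in hand. Write $j_E\colon\Bscr(\Hscr)\to\Nscr$, $j_E(a)=a\otimes\ind_E$, and $\mathrm{ev}_x\colon\Nscr\to\Bscr(\Hscr)$, $\mathrm{ev}_x(F)=F(x)$, so that $\Ical_t(E|x)=\mathrm{ev}_x\circ\Tcal_t\circ j_E$. Since $\ind_E\geq0$ in $L^\infty(\Rbb^s)$, the map $j_E$ is completely positive ($a\otimes\ind_E=(\openone\otimes\ind_E)(a\otimes\openone)(\openone\otimes\ind_E)$); $\mathrm{ev}_x$ is a unital $*$-homomorphism, hence completely positive, for a.e.\ $x$; and $\Tcal_t$ is completely positive by Theorem~\ref{mainTheor}. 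Complete positivity in item~(1) then follows by composition. Normalization (item~(2)) is immediate: $\ind_{\Rbb^s}=1$ gives $\Ical_t(\Rbb^s|x)[\openone]=\Tcal_t[\openone](x)=\openone$ by the normalization property of $\Tcal_t$ (Definition~\ref{def:new1}).

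The delicacy for normality and $\sigma$-additivity is that $\mathrm{ev}_x$ is \emph{not} weak$^*$-continuous, so I cannot merely compose normal maps; these statements hold only for a.e.\ $x$ and must be extracted through the predual. I would reduce to $a\geq0$ and to increasing sequences. If $a_n\uparrow a$ (bounded) in $\Bscr(\Hscr)$, then $a_n\otimes\ind_E\uparrow a\otimes\ind_E$ in $\Nscr$, and the normality of $\Tcal_t$ (it is the adjoint of ${\Tcal_t}_\ast$) gives $\Tcal_t[a_n\otimes\ind_E]\uparrow\Tcal_t[a\otimes\ind_E]$ weak$^*$. Pairing with an arbitrary positive $\rho\otimes g\in\Nscr_\ast$ and using monotone convergence turns this into $\int g(x)\Tr\{\Ical_t(E|x)[a_n]\rho\}\,\rmd x\to\int g(x)\Tr\{\Ical_t(E|x)[a]\rho\}\,\rmd x$; since the increasing pointwise limit $\lim_n\Ical_t(E|x)[a_n]$ exists (bounded by $\norm{a}\openone$ via normalization), comparing the two integrals forces $\lim_n\Ical_t(E|x)[a_n]=\Ical_t(E|x)[a]$ for a.e.\ $x$, which is normality. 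The identical monotone-convergence mechanism, applied to the partial sums $\sum_{i\leq N}\ind_{E_i}\uparrow\ind_{\bigcup_iE_i}$ for disjoint $E_i$ and then splitting a general $a$ into positive parts, yields $\sigma$-additivity.

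The composition law \eqref{qCKeq} is the main obstacle, because the point to get right is the meaning of integrating an operator-valued function against the operator-valued measure $\Ical_{t'}(\rmd z|x)$. The correct reading is that, for $h\in\Nscr$ viewed as the function $z\mapsto h(z)$, one has $\int_{\Rbb^s}\Ical_{t'}(\rmd z|x)[h(z)]=\Tcal_{t'}[h](x)$: this is clear for simple $h=\sum_ib_i\otimes\ind_{F_i}$, where it reduces to $\sum_i\Ical_{t'}(F_i|x)[b_i]=\Tcal_{t'}[\sum_ib_i\otimes\ind_{F_i}](x)$ by definition and linearity, and extends by the boundedness and normality of $\Tcal_{t'}$. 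Applying this to $h=\Tcal_t[a\otimes\ind_E]\in\Nscr$, whose value at $z$ is $\Ical_t(E|z)[a]$, and using the semigroup property $\Tcal_{t'}\circ\Tcal_t=\Tcal_{t+t'}$ gives
\[
\int_{\Rbb^s}\Ical_{t'}(\rmd z|x)\circ\Ical_t(E|z)[a]=\Tcal_{t'}\bigl[\Tcal_t[a\otimes\ind_E]\bigr](x)=\Tcal_{t+t'}[a\otimes\ind_E](x)=\Ical_{t+t'}(E|x)[a].
\]

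Finally, formula \eqref{trans+Weyl} follows from the quasi-free property together with a Fourier representation of the indicator. Using $\ind_E(x)=(2\pi)^{-s}\int_E\rmd z\int_{\Rbb^s}\rmd k\,\rme^{\rmi k^\T(x-z)}$ and $W_0(k)(x)=\rme^{\rmi k^\T x}$, I would write
\[
W_1(\zeta)\otimes\ind_E=\frac1{(2\pi)^s}\int_E\rmd z\int_{\Rbb^s}\rmd k\,\rme^{-\rmi k^\T z}\,W(\xi),\qquad \xi=\begin{pmatrix}\zeta\\k\end{pmatrix},
\]
as a weak$^*$-convergent integral in $\Nscr$, apply $\Tcal_t$ (interchanging it with the integral by boundedness and weak$^*$-continuity), insert $\Tcal_t[W(\xi)]=\rme^{\Psi_t(\xi)}W(S_t\xi)$ from \eqref{Tqf} and \eqref{ftsummary}, and evaluate at $x$, which produces exactly \eqref{trans+Weyl}. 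The one technical point to secure here is the distributional inversion of $\ind_E$ and the interchange with $\Tcal_t$; I would either control it in the weak$^*$ topology of $\Nscr$ or, equivalently, verify \eqref{trans+Weyl} weakly by pairing both sides with $\rho\in\Tscr(\Hscr)$ and an $L^1$ test function in $x$, reducing it to the scalar Fourier identity for $\ind_E$.
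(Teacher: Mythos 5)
Your proposal is correct and follows essentially the same route as the paper: properties 1--3 are read off the corresponding properties of $\Tcal_t$ in Definition~\ref{def:new1} together with the indicator structure, the composition law \eqref{qCKeq} comes from the identity $\int_{\Rbb^s}\Ical_{t'}(\rmd z|x)[F(z)]=\Tcal_{t'}[F](x)$ combined with the semigroup property, and \eqref{trans+Weyl} from the Fourier representation of $\ind_E$ inserted into \eqref{Tqf}, \eqref{f+Psi}. You merely supply details the paper leaves implicit (the monotone-convergence argument through the predual for normality and $\sigma$-additivity, and the simple-function justification of the operator-valued integral), which is a welcome refinement rather than a departure.
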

Properties 1.--3.\ in the proposition above come from the definition of \emph{instrument}. Equation \eqref{qCKeq} is the quantum analogue of the Chapman-Kolmogorov identity \eqref{CKeq}.

\begin{proof}
By using the defining equation \eqref{I|B} and the properties of $\Tcal_t$ in Definition \ref{def:new1}, we have that
\begin{itemize}
\item point a. implies property 1.,
\item point b. implies property 2.,
\item again normality (point a.), linearity, and the structure with the indicator function imply property 3.
\end{itemize}

For any $F(\cdot)\in \Nscr$ we have $\int_{\Rbb^s}\Ical_t(\rmd y|x)[F(y)]=\Tcal_t[F](x)$. Then, we have
\begin{multline*}
\int_{\Rbb^s}\Ical_{t+t'}(\rmd y|x)[F(y)]= \Tcal_{t+t'}[F](x)=\Tcal_{t'}\circ\Tcal_t[F](x)
\\ {}= \int_{y\in \Rbb^s} \Tcal_{t'}\left[ \Ical_t(\rmd y|\cdot)[F(y)]\right](x)
= \int_{y\in \Rbb^s} \int_{z\in \Rbb^s} \Ical_{t'}(\rmd z|x)\circ \Ical_t(\rmd y|z)[F(y)];
\end{multline*}
this gives \eqref{qCKeq}.

By using $\ind_{E}(x)= \frac 1 {(2\pi)^s}\int_{\Rbb^s}\rmd k \int_{E} \rmd z \, \rme^{\rmi  (x-z)^\T k}$, we get by easy computations
\begin{equation*}
\Ical_t(E|x)[W_1(\zeta)]=\frac 1 {(2\pi)^s}\int_{\Rbb^s}\rmd k \int_{E} \rmd z \, \rme^{-\rmi k^\T z}\Tcal_t[W(\xi)](x);
\end{equation*}
by \eqref{Tqf} and \eqref{f+Psi}, Eq.\ \eqref{trans+Weyl} follows.
\end{proof}

From \eqref{trans+Weyl} we obtain also the explicit form of the Fourier transform of the transition instrument, the \emph{characteristic operator}:
\begin{equation}\label{Gamma1}
\Gamma_t(k|x)[W_1(\zeta)]:=\int_{\Rbb^s} \Ical_t(\rmd z|x)[W_1(\zeta)]\rme^{\rmi k^\T z}= \Tcal_t[W(\xi)](x) =\exp\left\{ \Psi_t(\xi)\right\}W(S_t\xi)(x), \qquad \xi=\begin{pmatrix} \zeta \\ k \end{pmatrix}.
\end{equation}

\subsubsection{Multi-time probabilities and instruments}
By repeated application of the single-time instrument, also multi-time instruments can be obtained, again determined by their Fourier transform. Let us take the arbitrary times $0< t_1<t_2<\cdots<t_m$; the multi-time analogue of \eqref{Gamma1} is the quantity
\begin{multline*}
\Gamma(k_1,t_1;\ldots;k_m,t_m|\zeta, x)= \int_{\left(\Rbb^s\right)^m}\Ical_{t_1}(\rmd x_1|x)\circ \Ical_{t_2-t_1}(\rmd x_2|x_1)\circ \cdots \\ {}\circ \Ical_{t_m-t_{m-1}}(\rmd x_m|x_{m-1})[W_1(\zeta)]\exp\left\{\sum_{l=1}^m \rmi x_l^\T k_l\right\}.
\end{multline*}
By recursion, from \eqref{Tqf}, \eqref{f+Psi}, and \eqref{I|B}, we get
\begin{equation*}\begin{split}
&\Gamma(k_1,t_1;\ldots;k_m,t_m|\zeta, x)= \exp\left\{\sum_{l=1}^m \Psi_{t_l-t_{l-1}}(\xi_l)+\rmi x^\T P_0S_{t_1} \xi_1\right\}W_1(P_1S_{t_1} \xi_1),
\\
& t_0=0, \qquad \xi_l=S_{t_{l+1}-t_l} \xi_{l+1}+P_0k_l, \quad l=1,\ldots,m-1, \qquad \xi_m=\begin{pmatrix}\zeta \\ k_m\end{pmatrix}.
\end{split}
\end{equation*}

Condition \eqref{qCKeq} implies the consistency of the multi-time probabilities and the existence of a stochastic process in continuous time as in the purely classical case (Sec.\ \ref{sec:conprob}).
The joint probabilities for the process $X(t)$ at the times $0< t_1<t_2<\cdots<t_m$ (for an initial state $\hat\pi_0$) are given by
\begin{multline}\label{genprobs}
P[X(t_1)\in E_1, X(t_2)\in E_2, \ldots, X(t_m)\in E_m|\hat\pi_0]
\\ {}=\int_{\Rbb^s}\rmd x \Tr \biggl\{\hat\pi_0(x) \int_{E_1}\Ical_{t_1}(\rmd x_1|x)\circ \int_{E_2}\Ical_{t_2-t_1}(\rmd x_2|x_1)\circ \cdots \circ \int_{E_m}\Ical_{t_m-t_{m-1}}(\rmd x_m|x_{m-1})[\openone]\biggr\}.
\end{multline}
Then, the quantity
$\int_{\Rbb^s}\rmd x \Tr \left\{\hat\pi_0(x)\Gamma(k_1,t_1;k_2,t_2;\ldots;k_m,t_m|0, x)\right\}$ turns out to be the characteristic function of these multi-time probabilities. To better understand this result let us consider the two times case:
$0<t_1<t_2$. By simple computations we get the characteristic function
\begin{multline}\label{2ts}
\Ebb\left[\exp\left\{\rmi k_1^\T X(t_1)+\rmi k_2^\T X(t_2)\right\}\right]=\int_{\Rbb^s}\rmd x \Tr \left\{\hat\pi_0(x)\Gamma(k_1,t_1;k_2,t_2|0, x)\right\}
\\ {}=\int_{\Rbb^s}\rmd x \Tr \left\{\hat\pi_0(x)W_1(P_1S_{t_1}\xi_1)\right\}\exp\left\{\rmi x^\T P_0S_{t_1}\xi_1+\Psi_{t_1}(\xi_1) +\Psi_{t_2-t_1}(\xi_2)\right\},
\end{multline}
where
\[
\xi_2=P_0k_2, \qquad \xi_1= S_{t_2-t_1}P_0k_2+P_0k_1, \qquad S_{t_1}\xi_1= S_{t_1}P_0k_1+ S_{t_2}P_0k_2,
\]
\[
\Psi_{t_2-t_1}(\xi_2)=\int_0^{t_2-t_1}\rmd \tau \psi(S_\tau P_0 k_2), \quad \Psi_{t_1}(\xi_1)=\int_0^{t_1}\rmd \tau \psi(S_\tau P_0 k_1+S_{t_2-t_1+\tau}P_0k_2).
\]

\subsubsection{Conditional probabilities and conditional states}
Having constructed multi-time probabilities \eqref{genprobs} and instruments \eqref{I|B}, it is possible to introduce both probabilities and quantum states conditioned on the previous observations. Let us exemplify the introduction of conditional probabilities and states in a simple case: a single application of the transition instrument. The role of the notion of instrument in a quantum theory is to give the probabilities for the associated measurement and the state after the measurement conditioned on the observed result. In a time interval $(0,t)$, the probability of observing the increment $X(t)-X(0)$ of the classical subsystem in a Borel set $E\subset \Xi_0$, conditional on the initial value $X(0)=x$ and on the state $\rho_0$ of the quantum subsystem, is given by
\begin{equation}\label{cprob}
P_{t}(E|x):= P\big(X( t)-X(0)\in E|X(0)=x,\rho_0\big) =\Tr\left\{\rho_0 \Ical_{ t}(E|x)[\openone]\right\}.
\end{equation}
Then, the quantum state at time $t$, conditional on the observation of the classical component in $E$, is
\begin{equation*}
\rho_{t}(E;x)=\frac{\Ical_{ t}(E|x)_*[\rho_0]}{P_{ t}(E|x)};
\end{equation*}
also the limit case of the set $E$ reducing to a point $z$ can be considered. By using the characteristic function for the quantum state we get
\[
\chi_{\rho_{ t}(E;x)}(\zeta)=\Tr\left\{\rho_{ t}(E;x)W_1(\zeta)\right\} =\frac{\Tr\left\{\rho_0 \Ical_{ t}(E|x)[W_1(\zeta)]\right\}}{P_{ t}(E|x)}.
\]

In the quasi-free case, using $\xi^\T=(\zeta^\T,k^\T)$, we have
\[
\Tr\left\{\rho_0 \Ical_{ t}(E|x)[W_1(\zeta)]\right\}= \frac 1 {(2\pi)^s}\int_{\Rbb^s}\rmd k \int_E \rmd z \,\exp\left\{ \Psi_{ t}(\xi)-\rmi z^\T k+ \rmi x^\T P_0S_{ t}\xi\right\}\chi_{\rho_0}(P_1S_{ t}\xi).
\]
In particular, the probabilities \eqref{cprob} become
\begin{equation*}
P_{t}(E|x)= \frac 1 {(2\pi)^s}\int_{\Rbb^s}\rmd k \int_E \rmd z \,\exp\left\{ \Psi_{ t}(P_0 k)-\rmi z^\T k+ \rmi x^\T P_0S_{ t}P_0 k\right\}\chi_{\rho_0}(P_1S_{ t}P_0k);
\end{equation*}
the quantum state $\rho_0$ affects these probabilities only through $P_1S_{ t}P_0$. For a small time interval $\Delta t$, we have $P_1S_{\Delta t}P_0$ $\simeq Z^{10} \Delta t$ and $\chi_{\rho_0}(P_1S_{\Delta  t}P_0k)\simeq \exp\left\{\rmi \Delta t \langle R^\T\rangle_0 P_1 Z^{10}k\right\}$. By comparing with \eqref{K12int2} we see once more that the classical probabilities acquire information from the quantum system via the interaction term $\Kcal^2_{\rm int}$. Moreover, in the approximation for small times, we have that $P_{\Delta t}(\bullet|x)$ is an infinitely divisible distribution, because we have
\[
P_{\Delta t}(E|x)\simeq \frac 1 {(2\pi)^s}\int_{\Rbb^s}\rmd k \int_E \rmd z \,\exp\left\{ \Delta t\psi(P_0 k)+\rmi \left(x-z\right)^\T k+ \rmi  \Delta t x^\T  Z^{00} k + \rmi \Delta t \langle R^\T\rangle_0 P_1 Z^{10}k\right\}.
\]

\subsection{An example}\label{sec:exhyb}

To illustrate the properties of hybrid systems and the role of the interaction terms discussed in Sec.\ \ref{sec:gen+int}, we present a simple example with a classical system injecting noise in the quantum component and an observed output signal; as quantum system we take the quantum mechanical oscillator introduced in Sec.\ \ref{sec:optomec}. So, we take $n=1$, $s=2$; as generator of the deterministic evolution $S_t=\rme^{Z t}$ we take
\begin{equation*}
Z=\begin{pmatrix} 0&-\Omega&0 & g\\ \Omega & -\gamma & 0 & 0\\ 0 & b & -c& 0\\ 0& 0& 0& 0\end{pmatrix},  \qquad \begin{matrix} \Omega>0 , \qquad  \gamma >0, \qquad  c>0, \qquad  b,\,g\in \Rbb ,
\\
\phantom{n}
\\
\lambda_\pm=-\frac \gamma 2 \pm \rmi \omega, \qquad \ \omega=\sqrt{\Omega^2-\frac{\gamma^2}4}>0.
\end{matrix}
\end{equation*}
By setting $\xi(t)=S_t\xi(0)$ we can easily solve the first order equations $\dot\xi(t)=Z\xi(t)$; under the initial condition $\xi(0)= (\zeta_1 ,\, \zeta_2 ,\, k_1 ,\, k_2)^\T$, we get
\begin{equation*} \begin{cases}
\xi_1(t)=x_+\rme^{\lambda_+ t} + x_- \rme^{\lambda_- t}+\frac{\gamma gk_2}{\Omega^2}\,,
\\
\xi_2(t)=y_+\rme^{\lambda_+ t} + y_- \rme^{\lambda_- t}+\frac{ gk_2}{\Omega}\,,
\\
\xi_3(t)=\rme^{-ct}k_1 +b\ell(t),
\\
\xi_4(t)=k_2\,,
\end{cases}\end{equation*}
\begin{equation*} \begin{cases}
\ell(t)= y_+\frac{\rme^{\lambda_+t}-\rme^{-ct}}{\lambda_++c}+ y_-\frac{\rme^{\lambda_-t}-\rme^{-ct}}{\lambda_-+c} + gk_2\frac{1-\rme^{-ct}}{\Omega c} \,,
\\
2\rmi \omega x_+=-\lambda_-\zeta_1-\Omega \zeta_2- gk_2,
\\
2\rmi \omega y_+=\Omega \zeta_1+\lambda_+\left(\zeta_2+\frac{ gk_2}{\Omega}\right),
\\
x_-= \overline{x_+},\qquad y_-= \overline{y_+}.
\end{cases}
\end{equation*}

By \eqref{B,Z}, the form of $Z$ implies
\begin{equation*}
B=\frac{\sigma Z- Z^\T\sigma^\T}2=\frac 12 \begin{pmatrix} 0 &-\gamma &0& 0 \\ \gamma & 0 & 0&-g \\ 0 & 0 & 0 &0\\ 0 & g&0&0\end{pmatrix}, \qquad A-\rmi B\geq 0.
\end{equation*}
We take a very simple choice for $A$, compatible with the positivity condition above:
\begin{equation*}
A=\begin{pmatrix} A_{11} & A_{12} &0& 0 \\ A_{12} & A_{22} & 0&A_{24} \\ 0 & 0 & A_{33} &0\\ 0 &A_{24}&0&A_{44}\end{pmatrix}, \qquad \begin{matrix} A_{ii}\geq 0, \qquad & A_{22}=A_{22}^1+A_{22}^2,
\\
A_{22}^1\geq 0, \qquad  & A_{11}A_{22}^1\geq {A_{12}^{2}+}\frac {\gamma^2}4\,,
\\
A_{22}^2\geq 0,\qquad & A^2_{22}A_{44}\geq {A_{24}}^2+\frac {g^2}4\,.
\end{matrix}
\end{equation*}
The action of the dynamical semigroup on the Weyl operators is given by \eqref{Tqf} and \eqref{f+Psi}, where $\psi(\xi)$ has the expression \eqref{formpsi}. We choose a simpler expression also for this quantity, with the jump part only in the first component of the classical system:
\begin{equation}\label{exe:psi}
\psi(\xi)=-\frac 12 \,\xi^\T A \xi+\rmi \alpha^0 \xi_3 +J_3(\xi_3),\qquad  J_3(\xi_3)= \int_{\Rbb\setminus\{0\}}\nu(\rmd v)\left(\rme^{\rmi v \xi_3}-1-\rmi \ind_{\{\abs{v}<1\}}(v)\, v\xi_3\right) .
\end{equation}
To have the limit for large times, according to the discussion in Sec.\ \ref{sec:equil}, we add the assumption
\[
\int_{\abs v >1}\ln \abs v \,\nu(\rmd v)<+\infty.
\]
With these choices, the quantum/classical interaction terms in \eqref{interactions} become
\begin{subequations}\label{ex:interac}
\begin{equation}
\Kcal_{\rm int}^1[a\otimes f](x)=\rmi \left[H_x,\, a\right]f(x)=-\rmi b x_1 \left[Q,\, a\right]f(x),\end{equation}
\begin{equation} \Kcal_{\rm int}^2[a\otimes f](x)= \frac g 2 \{a,Q\} \,\frac{\partial f(x)}{\partial x_2},
\end{equation}
\begin{equation}
\Kcal_{\rm int}^3[a\otimes f](x)=-  \rmi[Q,a ] A_{24}\,\frac{\partial f(x)}{\partial x_2}, \qquad \Kcal_{\rm int}^4=0.
\end{equation}
\end{subequations}

\subsubsection{The input classical noise}
Let us start by considering only the input classical noise $X_1(t)$, which means to take $\zeta_1=0$, $\zeta_2=0$, $k_2=0$. This gives $x_\pm=y_\pm=0$, \ $\xi_1(t)=\xi_2(t)=\xi_4(t)=0$, \ $\xi_3(t)= \rme^{-c t}k_1$,
\begin{equation*}
\Tcal_t[W(0,0,k_1,0)](x)=\exp\left\{\int_0^t\rmd \tau \left(\rmi \alpha^0 k_1 \rme^{-c\tau} -\frac {A_{33}}2 \, k_1^{\;2}\rme^{-2c\tau}+J_3\big( k_1\rme^{-c\tau}\big)\right) \right\} \openone.
\end{equation*}
By this, we see that the quantum system is not involved in the evolution of the classical process $X_1(t)$. So, we can directly apply the results of Sec.\ \ref{sec:clprocess}; by  \eqref{process} and \eqref{Lprocess}, $X_1(t)$ is the process
\begin{equation*}
X_1(t)= \rme^{-c t} X(0)+\int_0^t \rme^{-c(t-\tau)}\rmd B_1(\tau);
\end{equation*}
$B_1(t)$ is a L\'evy process, whose increments have characteristic function
\begin{equation*}
\Ebb\left[ \exp\left\{ \rmi k_1\bigl(B_1(t+\Delta )-B_1(t)\bigr)\right\}\right]=\exp\left\{\Delta \left(\rmi \alpha^0 k_1  -\frac {A_{33}}2 \, k_1^{\;2}+J_3\big( k_1\big)\right)\right\}.
\end{equation*}

\subsubsection{The reduced quantum state}
The reduced quantum state does not satisfy a Markovian master equation, but, being the evolution quasi-free, we have implicitly the expression of the reduced state at any time. To see the noise acquired from the classical component, we study the quantum state at large times. We take $k_1=0$, $k_2=0$, which gives
\begin{subequations}\label{xiII}
\begin{gather}
\begin{cases}
\xi_1(t)=x_+\rme^{\lambda_+ t} + x_- \rme^{\lambda_- t},
\\
\xi_2(t)=y_+\rme^{\lambda_+ t} + y_- \rme^{\lambda_- t},
\\
\xi_3(t)=b\left( y_+\frac{\rme^{\lambda_+t}-\rme^{-ct}}{\lambda_++c}+ y_-\frac{\rme^{\lambda_-t}-\rme^{-ct}}{\lambda_-+c} \right),
\\
\xi_4(t)=0,
\end{cases} \\ \begin{cases}
{\displaystyle x_+=\overline{x_-}=\frac12\,\zeta_1-\frac\rmi 2 \left(\frac \gamma{2\omega}\,\zeta_1-\frac \Omega \omega\,\zeta_2\right)=-\frac{\lambda_-y_+}\Omega,}
\\
{\displaystyle y_+=\overline{y_-}=\frac12\,\zeta_2+\frac\rmi 2 \left(\frac \gamma{2\omega}\,\zeta_2-\frac \Omega \omega\,\zeta_1\right) =-\frac{\lambda_+x_+}\Omega;}
\end{cases}
\end{gather}
\end{subequations}
by the choice of the quantum system, $\xi_1(t)$ and $\xi_2(t)$ have the expression \eqref{xi1,2} of the example of Sec.\ \ref{sec:optomec}. The reduced quantum state is given by \eqref{reducstate}. The expressions for $\xi_j(t) $ in \eqref{xiII} decay exponentially; then,
the quantum characteristic function at large times turns out to be
\[
\chi_{\rho_{\rm eq}(\zeta)=}\lim_{t\to +\infty}\chi_{\rho_t}(\zeta)=\exp\left\{\int_0^{+\infty}\rmd \tau \,\psi\big(\xi(\tau)\big)\right\}.
\]
By \eqref{exe:psi} and \eqref{xiII}, we get
\[
\int_0^{+\infty}\rmd \tau \,\psi\big(\xi(\tau)\big)= \rmi\frac{ b\alpha^0\zeta_2}{c\Omega}-\frac 12 \,\zeta^\T A^{\rm q} \zeta + \int_0^{+\infty}\rmd \tau \,J_3\big(\xi_3(\tau)\big),
\]
\[
A^{\rm q}_{11}=A^{\rm q}_{22}+ \frac{A_{12}}{\Omega}  , \qquad A^{\rm q}_{12}=A^{\rm q}_{21}=0,
\qquad
A^{\rm q}_{22}=\frac{A_{11}+A_{22}}{2\gamma} +  \frac{A_{33} b^2 \left(\frac \gamma 2 +c\right)}{2c\gamma\left(\Omega^2+\gamma c+ c^2\right)}.
\]
Without the interaction with the first component $X_1(t)$ of the classical system ($b=0$, which in this example means $\Kcal^1_{\rm int}=0$), the asymptotic state of the quantum system becomes purely Gaussian, as the jump integral $J_3$ vanishes together with the terms proportional to $A_{33}$. The interaction of the quantum oscillator with $X_1(t)$ modifies the Gaussian component of the state, by adding the terms proportional to $A_{33}$. Moreover, while this interaction is linear (the first expression in \eqref{ex:interac}), the asymptotic quantum state is not Gaussian, but it contains also noise of ``jump type'' received from the classical noise $X_1(t)$.

\subsubsection{The observed output}
We consider now the observed output, the classical component $X_2(t)$ alone. Without interaction, i.e.\ $g=0$, $A_{14}=0$, only the term with $A_{44} $ survives and $X_2(t)$ is a Wiener process with variance $t A_{44}$. For the case with interaction, we discuss only some of the involved  probabilities, not the full process. By taking $\zeta_1=0$, $\zeta_2=0$, $k_1=0$, $k_2=\varkappa$, we get
\begin{equation*} \begin{cases}
\xi_1(t;\varkappa)=\varkappa g\left(\frac{\rme^{\lambda_- t}-\rme^{\lambda_+ t}}{2\rmi \omega}+ \frac \gamma{\Omega^2}\right),
\\
\xi_2(t;\varkappa)=\varkappa\,\frac{ g}{\Omega}\left( \frac{\lambda_+\rme^{\lambda_+ t}-\lambda_-\rme^{\lambda_- t}}{2\rmi \omega}\, +1\right),
\\
\xi_3(t;\varkappa)=\varkappa\,\frac{bg}{\Omega}\left(2\RE \frac{\lambda_+\left(\rme^{\lambda_+t}-\rme^{-ct}\right)}{2\rmi\omega\left(\lambda_++c\right)} +\frac{1-\rme^{-ct}}{ c} \right),
\\
\xi_4(t;\varkappa)=\varkappa.
\end{cases}
\end{equation*}
By using \eqref{Gamma1} we obtain the characteristic function of the one-time probability law
\[
\Ebb\left[\rme^{\rmi \varkappa X_2(t)}\right]= \int_{\Rbb^2}\rmd x \Tr\left\{\hat\pi_0(x)W_1\big(P_1\xi(t;\varkappa)\big) \right\}
\exp\left\{\rmi x^\T P_0\xi(t;\varkappa)+\int_0^t\rmd \tau \, \psi\big(\xi(\tau;\varkappa)\big)\right\}.
\]
For large times we have
\begin{gather*}
\xi(t;\varkappa)\simeq \xi(\infty;\varkappa)=\varkappa\left(\frac{ g\gamma }{\Omega^2} \,, \ \ \frac{ g}{\Omega}\,, \ \ \frac{bg}{c\Omega}\,, \ \ 1\right)^\T,
\\
\int_0^t\rmd \tau \, \psi\big(\xi(\tau;\varkappa)\big)\simeq \int_0^{+\infty}\rmd \tau \left[ \psi\big(\xi(\tau;\varkappa)\big)-\psi\big(\xi(\infty;\varkappa)\big)\right]+ t\psi\big(\xi(\infty;\varkappa)\big),
\end{gather*}
\[
\psi\big(\xi(\infty;\varkappa)\big)= J_3\Big(\frac {bg}{c\Omega}\Big)+ \rmi \frac{bg}{c\Omega}\,\alpha^0 \varkappa -\frac {\varkappa^2}2 \left(\frac {g^2\gamma^2 A_{11}}{\Omega^4}+\frac {g^2A_{22}}{\Omega^2} +\frac{2g^2\gamma  A_{12}}{\Omega^3}+ 2\frac {gA_{24}}{\Omega}+A_{44} + \frac {b^2g^2A_{33}}{c^2\Omega^2}\right).
\]
Obviously, we have to assume the existence of the integral in the central expression.
These expressions show that the process $X_2(t)$ cumulates contributions coming from the quantum system (the terms with the coupling constants $g$ and $A_{14}$), but also from the classical component $X_1(t)$; however, these last contributions have to pass through the quantum system as shown by the terms with the product $bg$ of coupling constants.

To see the contribution of the interaction terms in a short time, we consider the distribution of the increment for unit of time: $\Delta^{-1}\bigl(X_2(t+\Delta)-X_2(t)\bigr)$, with $\Delta$ small. It is not possible to take the limit of vanishing $\Delta$ because $X_2(t)$ contains a Wiener component, whose derivative is a white noise (a generalized stochastic process). In the two-time probability \eqref{2ts} we take $t_1=t$, \ $t_2=t+\Delta$, \ $k_2=\frac\varkappa\Delta (0, 1)^\T$, \ $k_1=-\frac\varkappa\Delta (0, 1)^\T$, and we get, for small $\Delta$,
\begin{gather*}
\xi_1= \frac\varkappa \Delta \left(S_\Delta-1\right) P_0 \begin{pmatrix} 0 \\ 1 \end{pmatrix}\simeq \varkappa ZP_0 \begin{pmatrix} 0 \\ 1 \end{pmatrix} = \varkappa gP_1\begin{pmatrix} 1 \\ 0  \end{pmatrix},
\\
\xi_2=\frac\varkappa \Delta\,P_0 \begin{pmatrix} 0 \\ 1 \end{pmatrix}, \qquad S_{t}\xi_1\simeq g\varkappa Y(t), \qquad \Psi_{t}(\xi_1)= \int_0^{t}\rmd \tau \,\psi\left(g\varkappa Y(\tau)\right),
\\
Y(t)= \frac{1}{2\rmi \omega} \begin{pmatrix}  \lambda_+\rme^{\lambda_-t}- \lambda_-\rme^{\lambda_+t} \\ \Omega\left(\rme^{\lambda_+t}- \rme^{\lambda_-t}\right) \\ b\Omega\left(\frac{\rme^{\lambda_+t} - \rme^{-ct}} {\lambda_+ +c} -\frac{\rme^{\lambda_-t} - \rme^{-ct}} {\lambda_- +c} \right) \\ 0 \end{pmatrix},
\end{gather*}
\begin{multline*}
\Psi_{\Delta}(\xi_2)\simeq \Delta\int_0^{1}\rmd x \,\psi\left(\frac{\varkappa}\Delta\, S_{x\Delta} P_0\begin{pmatrix} 0\\ 1\end{pmatrix}\right)\simeq \Delta\int_0^{1}\rmd x \,\psi\left(\varkappa\left(\frac{1}\Delta\, +xZ\right) P_0\begin{pmatrix} 0\\ 1\end{pmatrix}\right)
\\
= \Delta\int_0^{1}\rmd x \,\psi\left(\frac{\varkappa}\Delta\,  P_0\begin{pmatrix} 0\\ 1\end{pmatrix}+ \varkappa g xP_1\begin{pmatrix} 1\\ 0\end{pmatrix}\right)  =-\frac{A_{44}}{2\Delta}\,\varkappa^2  -\frac{A_{11}}{6}\,\Delta g \varkappa^2.
\end{multline*}
We also change the initial time from 0 to $t_0$, which means $\hat \pi_0 \to \hat \pi_{t_0}$. Then \eqref{2ts}  becomes
\begin{multline}\label{CFincrements}
\Ebb\left[\exp\left\{\rmi \varkappa\,\frac{ X_2(t_0+t+\Delta)- X_2(t_0+t)}\Delta\right\}\right]
\\ {}\simeq\int_{\Rbb^2}\rmd x \Tr \left\{\hat\pi_{t_0}(x)W_1(g\varkappa P_1Y(t))\right\}
\exp\left\{\rmi x_1 gY_3(t)\varkappa  -\frac{A_{44}}{2\Delta}\,\varkappa^2+ \int_0^{t}\rmd \tau \,\psi\big(g\varkappa Y(\tau)\big)\right\}.
\end{multline}
Note that there is no dependence on $x_2$ inside the exponential which means that these probabilities do not depend on the distribution of $X_2(t_0)$. For large $t$, $Y(t)$ vanishes and any memory of  the state $\hat \pi_{t_0}$  is lost. For $t=0$ we get $Y_i(0)=\delta_{i1}$ and \eqref{CFincrements} becomes
\begin{equation*}
\Ebb\left[\exp\left\{\rmi \varkappa\,\frac{ X_2(t_0+\Delta)- X_2(t_0)}\Delta\right\}\right]
\simeq \Tr \left\{\rho_{t_0}\rme^{\rmi Q g \varkappa}\right\}
\exp\left\{  -\frac{A_{44}}{2\Delta}\,\varkappa^2\right\}, \qquad \rho_{t_0}=\int_{\Rbb^2}\rmd x\,\hat \pi_{t_0}(x).
\end{equation*}
This means that the distribution of the increment per unit of time is the quantum distribution of the position in the state $\rho_{t_0}$ smoothed by a large Gaussian. The state $\rho_{t_0}$ depends on the whole previous history, and it could be the conditional state with respect to previous observations (an argument not developed here).

\section{Conclusions}\label{sec:concl}
In this article we have afforded the problem of giving a unified treatment of interacting classical and quantum systems. The formalism we adopted allows to unify quantum master equations, Liouville equation, Kolmogorov-Fokker-Planck equation. The treatment has been restricted to the case of a Markovian and quasi-free dynamics. The notion of quasi-free dynamics, Sec.\ \ref{Sec:qfSem}, generalizes the well known Gaussian case by including contributions of ``jump type''. The problem of finding the most general Markovian quasi-free dynamics has been solved, see Theor.\ \ref{mainTheor}; in the construction, connections with the classical \LevKhi\  formula have been used, Sec.\ \ref{sec:step3}.

The restriction to the quasi-free case allows to give explicitly the dynamical map, without the need of solving the evolution equations, involving unbounded generators. However, also the formal structure of the generator is given (Sec.\ \ref{sec:generator}), because this gives hints to construct hybrid dynamics out of the quasi-free case. As a byproduct, we have obtained the most general quasi-free quantum dynamical semigroup, together with the structure of its generator, and we have presented examples which underline the role of the ``jump'' contributions (Sec.\ \ref{sec:qds}).

The results in the pure classical case (Sec.\ \ref{classical}) are all essentially known. However, this case allows to introduce the idea that not only probabilities at one time are involved in the theory, but also transition probabilities, multi-time probabilities, probabilities for stochastic processes in continuous time. In the general case (Sec.\ \ref{sec:hybrid}), these ideas are translated in the introduction of \emph{transition instruments} (Sec.\ \ref{sec:I+prob}), which allow for the construction of multi-time probabilities for the classical component. By analyzing the generator of the hybrid dynamics (Sec.\ \ref{sec:gen+int}), it is possible to identify the interactions responsible of the action of the classical system on the quantum one and the interactions responsible of the flux of information from the quantum system to the classical one. The classical component can be observed, in principle without disturbance, and information about the quantum system can be collected. This last feature connects dynamical semigroups for hybrid systems to the theory of quantum measurements in continuous time. Let us stress that the whole construction respects the general idea that no information can be extracted from a quantum system without dissipation; indeed, by asking the vanishing of the dissipative terms in the quantum part of the generator, we get that also the terms responsible of the flux of information towards the classical component have to vanish. Finally, in Sec.\ \ref{sec:exhyb}, an example is elaborated, which shows how a classical system can inject noise of Gaussian and jump type into the quantum one, and how it can extract information from the quantum component.

Many points are left open for future developments. In the theory of open systems, the notion of \emph{stochastic dilation} or \emph{stochastic unraveling} has been introduced, to mean the case of quantum master equations written in form of (classical) stochastic differential equations \cite{BarG09,WisM10,Hol96}. This can be done also for the hybrid case, as shown in \cite{Opp+23,BarH95}, but not all possibilities have been explored. Another old idea for hybrid systems is to treat the classical component as quantum, but with only commuting operators appearing as observables. The connections of L\'evy processes with Bose quantum fields and \emph{quantum stochastic calculus} have been worked out in detail \cite{Parthas92,Parthas18}. By using these means, in the case of the monitoring in time of a quantum system, the observed signal has been dilated to a quantum system and represented by quantum observables, commuting among themselves also at differen times in the Heisenberg picture \cite{Bar86,Bar06}. The same construction for the dynamics of general hybrid systems is to be done, as it is open the problem of not quasi-free hybrid systems, possibly with unbounded generators \cite{BarC11,SieHW17}.

Inside the theory of quantum measurements in continuous time, also the notions of \emph{quantum filtering}, \emph{quantum trajectories}, \emph{conditional states} have been developed \cite{BarG09,WisM10,Bar06,ZolG97,BarB91,Bel88,Bel89}; the introduction of these notions for hybrid systems is clearly possible and we expect this to be fruitful. The classical component of the hybrid can be observed, or partially observed, and the whole filtering theory for stochastic processes enters into play. As we wrote, in the hybrid case there is transfer of information also from the classical to the quantum component and it could be interesting to connect this to the idea of \emph{feedback} on quantum systems \cite{WisM10}. Obviously, the final open problem is to find physical application of the new ideas on the dynamics of hybrid systems in quantum optics and quantum information, beyond the quasi-free case.

\end{document}